\renewcommand{\leq}{\leqslant}
\renewcommand{\geq}{\geqslant}
\renewcommand{\epsilon}{\varepsilon}
\newcommand{\bull}{{\mbox{bull}}}
\newcommand{\php}{proper homogeneous pair}
\newcommand{\mphp}{minimally-sided proper homogeneous pair}
\renewenvironment{proof}[1][]{\par \noindent {\bf Proof:#1}\ }{\hfill$\Box$\\}
\newenvironment{proofETH}[1][]{\par \noindent {\bf Proof of Theorem~\ref{thm:AsymptOptimal}:#1}\ }{\hfill$\Box$\\}
\newtheorem{claimN}{Claim}
\newcommand{\probl}[3]{
\begin{flushleft}
\fbox{
\begin{minipage}{12.6cm}
\noindent {\sc #1}\\
          {\bf Input:} #2\\
          {\bf Question:} #3
\end{minipage}}
\medskip
\end{flushleft}
}
\newcommand{\probls}[4]{
\begin{flushleft}
\fbox{
\begin{minipage}{12.6cm}
\noindent {\textsc {#1}}\\
          {\bf Input:} #2\\
          {\bf Parameter:} #4\\
          {\bf Question:} #3
\end{minipage}}
\medskip
\end{flushleft}
}
\newcommand{\poly}{\mathop{\rm poly}}
\author{Henri Perret du Cray \and Ignasi Sau}
\title{Improved FPT algorithms for\\ weighted independent set in bull-free graphs
\thanks{Research supported by the Languedoc-Roussillon Project ``Chercheur d'avenir'' KERNEL.}
%\red{[Version of \today, at \currenttime]}}
}
\titlerunning{Improved FPT algorithms for weighted independent set in bull-free graphs}
\authorrunning{Henri Perret du Cray and  Ignasi Sau}
\institute{AlGCo project-team, CNRS, LIRMM, Montpellier, France.\\ \vspace{.1cm} \email{ henri.perretducray@gmail.com, ignasi.sau@lirmm.fr}\\ \vspace{.3cm}}
\begin{document}

\maketitle
\setcounter{footnote}{0}

\vspace{-.2cm}
\begin{abstract}
Very recently, Thomass\'{e}, Trotignon and Vuskovic [WG 2014] have given an FPT algorithm for \textsc{Weighted Independent Set} in  bull-free graphs parameterized by the weight of the solution, running in time  $2^{O(k^5)} \cdot n^9$. In this article we improve this running time to $2^{O(k^2)} \cdot n^7$. As a byproduct, we also improve the previous Turing-kernel for this problem from $O(k^5)$ to $O(k^2)$. Furthermore, for the subclass of bull-free graphs without holes of length at most $2p-1$ for $p \geq 3$, we speed up the running time to $2^{O(k \cdot k^{\frac{1}{p-1}})} \cdot n^7$. As $p$ grows, this running time is asymptotically tight in terms of $k$, since we prove that for each integer $p \geq 3$, \textsc{Weighted Independent Set} cannot be solved in time $2^{o(k)} \cdot n^{O(1)}$  in the class of $\{\bull,C_4,\ldots,C_{2p-1}\}$-free graphs  unless the ETH fails.

%One of our improvements relies on a closer look at the structure of the so-called \emph{basic} bull-free graphs as described by Chudnovsky in her series of papers.
%
%
%, by exploiting the structure of this graph class provided by the series of papers of Chudnovsky.

\vspace{0.25cm} \textbf{Keywords:} parameterized complexity, FPT algorithm, bull-free graphs, independent set, Turing-kernel.
\end{abstract}

\section{Introduction}
\label{sec:intro}

\vspace{-.15cm}
\paragraph{\textbf{\emph{Motivation}}.} Parameterized complexity deals with problems whose instances $I$ come equipped with an additional integer parameter $k$, and the objective is to obtain algorithms whose running time is of the form $f(k) \cdot \poly(|I|)$, where $f$ is some computable function (see~\cite{FlGr06,DF99,Nie06} for an introduction to the field). Such algorithms are called \emph{Fixed-Parameter Tractable} (FPT). A fundamental notion in parameterized complexity is that of \emph{kernelization}, which asks for the existence of polynomial-time preprocessing algorithms that produce equivalent instances whose size depends exclusively (preferably polynomially)  on $k$. We will be only concerned with problems defined on graphs.

In order to obtain efficient FPT algorithms, a usual strategy is to focus on a graph class whose members have a well-defined {\sl structure}, which can then be exploited to design algorithms.  This paradigm has been exhaustively used in the last decades to obtain efficient FPT algorithms for graphs that exclude a fixed graph as a {\sl minor}, relying on the structural characterization of this graph class given by Robertson and Seymour in their seminal work~\cite{RobertsonS03a}. Nevertheless, the situation is quite different in graphs that exclude a fixed graph as an {\sl induced subgraph}, for which the design of FPT algorithms is still in an incipient stage. Quite recently, the structural description of \emph{claw-free} graphs given by Chudnovsky and Seymour~\cite{ChudnovskyS05} has triggered the design of FPT algorithms in this graph class~\cite{CyganPPPW11,HermelinML12,HermelinMLW11}. Even more recently, a structural characterization of \emph{bull-free} graphs has been given by Chudnovsky~\cite{Chudnovsky12a,Chudnovsky12}. In this article we focus on this latter graph class.

The \emph{bull} is the graph defined by the set of vertices $\{x_1,x_2,x_3,y,z\}$ and the set of edges $\{x_1x_2,x_2x_3,x_3x_1,x_1y,x_2z\}$ (see Fig.~\ref{fig:bull_figure} for an illustration). For a graph $F$, a graph $G$ is said to be \emph{$F$-free} if $G$ does not contain an induced subgraph isomorphic to $F$. Note that the class of bull-free graphs contains the classes of $P_4$-free and triangle-free graphs, so in particular it contains all bipartite graphs.

\begin{figure}[h!]
	\centering
         \vspace{-.8cm}
		\includegraphics[width=0.34\textwidth]{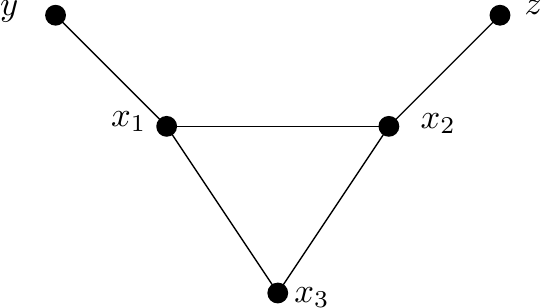}
	\caption{The bull.} \vspace{-.35cm}
	\label{fig:bull_figure}
\end{figure}

An \emph{independent set} in a graph is a set of pairwise non-adjacent vertices. In a vertex-weighted graph, the \emph{weight} of an independent set is the sum of the weights of its vertices. We are interested in the following parameterized problem.

\probls
{Weighted Independent Set}
{An graph $G=(V,E)$ with $|V|=n$, a weight function $w: V \to \mathds{N}$, and a positive integer $k$.}
{Does $G$ contain an independent set of weight at least $k$?}
{The integer $k$.}

%\ig{Say that the algorithm finds a smaller independent set...}

The above problem is well-known to be $W[1]$-hard in general graphs~\cite{DF99}, and therefore an FPT algorithm is unlikely to exist (see~\cite{FlGr06,DF99,Nie06} for the missing definitions). Thus, it is relevant to find graph classes for which the problem admits an FPT algorithm, and for which the non-parameterized version still remains NP-hard. In this direction, Dabrowski, Lozin, M{\"u}ller and Rautenbach~\cite{DabrowskiLMR12} gave an FPT algorithm for \textsc{Weighted Independent Set} in $\{\bull, \overline{P_5}\}$-free graphs, where $\overline{P_5}$ is the complement of a path on 5 vertices. Note that the problem is NP-hard in $\{\bull, \overline{P_5}\}$-free graphs, as it is NP-hard in the subclass of triangle-free graphs~\cite{Poljak74}. Recently, Thomass\'{e}, Trotignon and Vuskovic~\cite{ThomasseTV13} generalized this result by giving an FPT algorithm for \textsc{Weighted Independent Set} in the class of bull-free graphs, by exploiting the structural results of Chudnovsky~\cite{Chudnovsky12a,Chudnovsky12}. This article is the starting point of our work, and its main result is the following.

\begin{theorem}[Thomass\'{e}, Trotignon and Vuskovic~\cite{ThomasseTV13}]\label{thm:PreviousFPT} \textsc{Weighted Independent Set} in the class of bull-free graphs can be solved  in time $2^{O(k^5)} \cdot n^9$.
\end{theorem}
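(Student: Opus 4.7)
The plan is to combine Chudnovsky's structural theorem for bull-free graphs with a recursive decomposition strategy. Chudnovsky's theorem provides a dichotomy: every bull-free graph is either ``basic'' (essentially triangle-free, a complement of a triangle-free graph, or one of a few explicit families) or admits a proper homogeneous pair $(A,B)$ that can be exploited to reduce the instance. I would organise the algorithm around this dichotomy, invoking the base algorithm on basic pieces and recursing through the decomposition on the rest.

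For the basic cases, the parameterized problem admits a direct solution. In a triangle-free graph, Ramsey's theorem forces an independent set of size $k$ as soon as $n > R(3,k) = O(k^{2}/\log k)$, in which case the \YES answer is immediate, since each vertex contributes weight at least~$1$; otherwise the graph has only $O(k^{2})$ vertices and brute-force enumeration of all subsets settles the problem in time $2^{O(k^{2})}$. In a complement of a triangle-free graph every independent set has size at most~$2$, so the problem is trivially polynomial, and the other explicit basic families admit similarly easy algorithms.

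The decomposition step is the conceptual core. Given a homogeneous pair $(A,B)$, every vertex outside $A \cup B$ is fully adjacent or fully non-adjacent to $A$ (and similarly for $B$), so the restriction of any independent set to $A \cup B$ is characterised by its \emph{type}: which vertices it picks in $A$, which in $B$, and how it constrains the rest. For each of the relevant types one precomputes an optimal-weight representative inside $A \cup B$ and replaces $A \cup B$ by a small gadget carrying these weights; iterating removes all proper homogeneous pairs and leaves a basic graph, on which one applies the base algorithm. The main obstacle, and the source of the large exponents, is bounding the number of relevant types and controlling the cumulative cost of the recursion: the $2^{O(k^{5})}$ factor reflects the combined branching over the interactions of a hypothetical $k$-vertex solution with each side of the successive decompositions, and the $n^{9}$ factor reflects the polynomial time needed, at each recursive call, to detect homogeneous pairs and to identify the basic substructures guaranteed by Chudnovsky's theorem. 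Tightening both of these contributions, by a sharper analysis of the per-level branching and by replacing contraction with carefully chosen branching, is precisely what the present paper sets out to do in order to push the exponent of $k$ down from $5$ to $2$.
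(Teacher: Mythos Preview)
Your high-level architecture---Chudnovsky's dichotomy, solve basic pieces directly, recurse through decompositions---is indeed that of~\cite{ThomasseTV13}, as summarised in Algorithm~\ref{algo:sketch}. But two points of your account are off, and they matter because they obscure where the exponent~$5$ actually comes from.

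First, the basic class $\mathcal{T}_1$ is \emph{not} the class of triangle-free graphs: a trigraph in $\mathcal{T}_1$ partitions into a triangle-free part $X$ \emph{together with} pairwise anticomplete strong cliques $K_1,\ldots,K_t$, and these cliques may be large. Your Ramsey argument bounds $|X|$ by $O(k^2)$, which is correct and is exactly the function $g(k)$ already used in~\cite{ThomasseTV13}; it says nothing about the cliques. The kernel of size $O(k^5)$ in~\cite{ThomasseTV13} comes from bounding $\sum_i |K_i|$ on top of $|X|$, and that is precisely the analysis the present paper sharpens in Lemma~\ref{lemma:NEWLemma6.1} to reach $O(k^2)$ overall.

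Second, the $2^{O(k^5)}$ factor is therefore not a product of ``branching over interactions'' of a $k$-vertex solution with successive decompositions; there is no such branching. The recursion in Algorithm~\ref{algo:sketch} is linear (each call strictly shrinks $T_Y$), and the exponential term arises \emph{only} at the leaves, when one brute-forces a basic instance shown to have at most $f(k)=O(k^5)$ vertices. Correspondingly, the improvement in this paper is not ``replacing contraction with branching'' but a tighter structural analysis of $\mathcal{T}_1$ (properties~(i) and~(ii) in the proof of Lemma~\ref{lemma:NEWLemma6.1}) together with faster detection of decompositions (Theorem~\ref{thm:4.3}).
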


\paragraph{\textbf{\emph{Our results}}.} Our main contribution is to improve the running time of the FPT algorithm of Thomass\'{e}, Trotignon and Vuskovic~\cite{ThomasseTV13} stated in Theorem~\ref{thm:PreviousFPT}, specially in terms of the parameter $k$.

%More precisely, we present the following algorithm.

\begin{theorem}\label{thm:BetterAlgo} \textsc{Weighted Independent Set} in the class of bull-free graphs can be solved  in time $2^{O(k^2)} \cdot n^7$.
\end{theorem}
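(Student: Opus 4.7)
The plan is to retain the overall scheme of Thomassé, Trotignon and Vuskovic~\cite{ThomasseTV13}, whose algorithm alternates structural reduction rules derived from Chudnovsky's decomposition of bull-free graphs~\cite{Chudnovsky12a,Chudnovsky12} with a brute-force resolution of the sub-instances that remain once no rule applies. The $2^{O(k^5)}$ bound of~\cite{ThomasseTV13} ultimately stems from a Turing-kernel of size $O(k^5)$; hence, improving the Turing-kernel to size $O(k^2)$ and solving each reduced instance by exhaustive enumeration of subsets of cardinality at most $k$ already yields the claimed $2^{O(k^2)} \cdot n^{O(1)}$ running time.

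The heart of the improvement is a refined treatment of \php{}s. Instead of enumerating each such pair in full (which is what causes the $O(k^5)$ blow-up in~\cite{ThomasseTV13}), I would introduce the notion of a \mphp{} and show that whenever the graph contains a \php{}, it also contains an \mphp{} whose two sides each admit only $O(k)$ vertices relevant to independent sets of weight at least $k$. Iterating this reduction together with the standard rules (twins, homogeneous sets, vertices of small weighted neighbourhood) then bounds the number of vertices in the final reduced instance by $O(k^2)$, from which the brute-force step yields $2^{O(k^2)}$.

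The speed-up of the polynomial factor from $n^9$ to $n^7$ will come from two implementation improvements. First, locating an \mphp{} should require only $O(n^4)$ time because the minimality criterion prunes the search earlier than a naive enumeration over all \php{}s, which takes $O(n^6)$. Second, the outer recursion of~\cite{ThomasseTV13} can be redesigned to trigger the reduction at most $O(n^3)$ times (rather than $O(n^5)$) by amortising the work across successive applications of Chudnovsky's decomposition and by reusing previously-computed structural information.

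The main technical obstacle will be the correctness lemma for the \mphp{} reduction: one has to prove that every independent set of weight at least $k$ can be transformed, without loss of weight, into one whose intersection with each side of the pair is supported on $O(k)$ canonical representatives. I expect this to require a careful exchange argument exploiting both the homogeneity of the pair and the absence of an induced bull: concretely, if two vertices of the same side play interchangeable roles with respect to the rest of the graph, then the heaviest one can always be preferred without introducing a bull or losing weight. Verifying that this exchange remains compatible with the subsequent reduction rules, so that the whole Turing-kernel stays provably equivalent to the input instance, is the step I expect to be the most delicate.
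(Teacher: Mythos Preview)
Your proposal misidentifies where the $O(k^5)\to O(k^2)$ improvement actually lives. In~\cite{ThomasseTV13} the $k^5$ does \emph{not} come from the treatment of proper homogeneous pairs; the homogeneous-pair reduction is already handled by replacing $(A,B)$ with a constant number of weighted marker vertices, and this step contributes nothing to the exponent of $k$. The bottleneck is entirely in Step~1 of Algorithm~\ref{algo:sketch}, namely solving the problem on a \emph{basic} trigraph in~$\mathcal{T}_1$. Such a trigraph decomposes into a triangle-free part $X$ and disjoint strong cliques $K_1,\dots,K_t$; the $O(k^5)$ in~\cite{ThomasseTV13} arises from a crude bound on $\sum_i|K_i|$ in terms of $|X|$. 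The paper's improvement is Lemma~\ref{lemma:NEWLemma6.1}: using two structural facts from Chudnovsky's description of $\mathcal{T}_1$ (each vertex of $X$ sees at most two cliques, and each clique has a nested bipartite neighbourhood in $X$), one shows $|K_i|\le 2|N(K_i)|$ and hence $|V(T)|\le 5|X|\le 5g(k)=O(k^2)$. Your proposed ``exchange argument'' on \mphp{}s attacks the wrong object and would leave the basic-case bound untouched.

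The $n^9\to n^7$ improvement is also not as you describe. The recursion depth is already $O(n)$ in~\cite{ThomasseTV13} (each call strictly shrinks $T_Y$), so there is no redesign to $O(n^3)$ iterations. The gain is purely in the per-step cost of finding an extreme decomposition: Theorem~\ref{thm:4.3} brings this from $O(n^8)$ down to $O(n^6)$, via an $O(n^6)$ routine for small homogeneous pairs (Lemma~\ref{lemma:algo1}) and an $O(n^6)$ routine for minimally-sided homogeneous cuts (Lemma~\ref{lemma:algo2}); the latter guesses only four vertices rather than five before running the quadratic forcing procedure. The notion of \mphp{} is indeed used here, but solely to shave one factor of $n$ in the decomposition search, not to control the kernel size.
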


We would like to point out that we strongly follow the algorithm of~\cite{ThomasseTV13}, and that our faster algorithm is obtained by improving locally some of the procedures and analyses given in~\cite{ThomasseTV13}. In particular, one of our main improvements relies on a closer look at the structure of the so-called \emph{basic} bull-free graphs as described by Chudnovsky in her series of papers~\cite{Chudnovsky12a,Chudnovsky12}.

It is shown in~\cite[Theorem 7.2]{ThomasseTV13} that the FPT algorithm of Theorem~\ref{thm:PreviousFPT}  actually provides a Turing-kernel\footnote{For a function $g: \mathds{N} \to \mathds{N}$, a parameterized problem $\Pi$ is said to have a \emph{Turing-kernel of size} $g(k)$
if there is an algorithm which, given an input $(I, k)$ together with an oracle for $\Pi$ that decides whether $(I, k) \in \Pi$ in constant time whenever $|I| \leq g(k)$,
decides whether $(I, k) \in \Pi$ in time polynomial in $|I|$ and $k$.} of size $O(k^5)$ for \textsc{Weighted Independent Set}  in bull-free graphs, and that a polynomial kernel  is not possible under reasonable complexity hypothesis. Therefore, as our algorithm follows closely that of Theorem~\ref{thm:PreviousFPT}, from Theorem~\ref{thm:BetterAlgo} we immediately obtain the following corollary.

\begin{corollary}\label{thm:BetterTuring} There exists a Turing-kernel of size $O(k^2)$ for \textsc{Weighted Independent Set}  in the class of bull-free graphs.
\end{corollary}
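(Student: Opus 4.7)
My plan is to mimic the reduction from the FPT algorithm of Theorem~\ref{thm:PreviousFPT} to a Turing-kernel of size $O(k^5)$ carried out in~\cite[Theorem 7.2]{ThomasseTV13}. The crucial observation in that reduction is that the super-polynomial factor $2^{O(k^5)}$ arises solely from a brute-force enumeration performed on certain ``basic'' subinstances whose vertex set has size bounded by $O(k^5)$; the rest of the algorithm processes the input in time polynomial in $n$ and $k$ while (possibly recursively) producing the subinstances to be brute-forced.

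First I would verify that the improved algorithm behind Theorem~\ref{thm:BetterAlgo} has exactly the same overall skeleton as the one in~\cite{ThomasseTV13}. Since our contribution consists only in tightening local bounds on the size of the basic subinstances reached at the leaves of the recursion, the sole change is that these subinstances now have size $O(k^2)$ instead of $O(k^5)$, while the polynomial-time part of the algorithm is left essentially untouched. In particular, the $2^{O(k^2)}$ factor in the running time of Theorem~\ref{thm:BetterAlgo} comes precisely from the brute-force resolution of these size-$O(k^2)$ basic subinstances, and no other source of exponential blow-up is introduced.

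Given this, the Turing-kernel is obtained by replacing the brute-force step with a call to the assumed oracle for \textsc{Weighted Independent Set} in bull-free graphs: each such call is on an instance of size $O(k^2)$, and the rest of the algorithm runs in time polynomial in $n$ and $k$. This yields a polynomial-time reduction that queries the oracle only on instances of size $O(k^2)$, which matches the definition of a Turing-kernel of size $O(k^2)$. The one point that I would need to double-check is that each subinstance handed to the oracle is itself a legal input of the problem, namely a bull-free graph with an $\mathds{N}$-valued weight function; however, since the subinstances produced along the recursion in~\cite{ThomasseTV13} are always induced subgraphs of the input (with appropriately adjusted weights), this should follow immediately from the construction, and I do not expect any real obstacle beyond a careful bookkeeping of the sizes at each recursion level.
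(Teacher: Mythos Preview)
Your proposal is correct and coincides with the paper's own justification: the corollary is presented there as an immediate consequence of Theorem~\ref{thm:BetterAlgo}, precisely because the algorithm keeps the recursive skeleton of~\cite{ThomasseTV13} and only sharpens the size bound on the basic leaf instances from $O(k^5)$ to $O(k^2)$. One small correction to your final remark: the basic blocks $T_X$ are not literally induced subgraphs of the input (a switchable pair of marker vertices may be added), but they remain bull-free trigraphs of size $O(k^2)$, so the oracle substitution you describe goes through without obstacle.
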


It is natural to ask whether the algorithm of Theorem~\ref{thm:BetterAlgo} can be improved for subclasses of bull-free graphs. We prove that it is the case when, in addition to the bull, we exclude the holes\footnote{A \emph{hole} in a graph is an induced cycle of length at least 4.} of length at most $2p-1$ for some integer $p \geq 3$ as induced subgraphs. Note that for each $p \geq 3$, the \textsc{Weighted Independent Set} problem is NP-hard in the class of $\{\bull,C_4,\ldots,C_{2p-1}\}$-free graphs, as for each integer $g \geq 3$, its unweighted version is NP-hard in the class of graphs of girth greater than $g$~\cite{Murphy92}, that is in $\{C_3,C_4,\ldots,C_{g}\}$-free graphs, which is a subclass of $\{\bull,C_4,\ldots,C_{g}\}$-free graphs for $g \geq 4$. More precisely, we prove the following theorem.

\begin{theorem}\label{thm:FasterAlgoNoHoles} For each integer $p \geq 3$, \textsc{Weighted Independent Set} in the class of $\{\bull,C_4,\ldots,C_{2p-1}\}$-free graphs can be solved in time $2^{O(k \cdot k^{\frac{1}{p-1}})} \cdot n^7$.
\end{theorem}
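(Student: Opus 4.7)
The plan is to revisit the algorithm underlying Theorem~\ref{thm:BetterAlgo} and replace the $k^2$ factor in the exponent by $k \cdot k^{1/(p-1)}$ under the additional hole-freeness hypothesis. Following the pattern of \cite{ThomasseTV13}, the $k^2$ exponent arises when the problem is reduced, via a trigraph decomposition, to the case of basic bull-free graphs, and one must enumerate, for each vertex of a candidate independent set of weight at most $k$, a local substructure whose size is bounded by $O(k)$. This gives $2^{O(k) \cdot k} = 2^{O(k^2)}$ choices in total. The goal will be to show that, when the graph additionally excludes the holes $C_4,\ldots,C_{2p-1}$, the relevant substructure attached to each vertex has size only $O(k^{1/(p-1)})$.

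The key geometric ingredient will be a Moore-type bound tailored to the algorithm's branching structure. Suppose a vertex $v$ participating in the enumeration admits $t$ distinct ``candidates'' in its local substructure. Because the vertices involved will play symmetric roles in the enumeration (essentially non-neighbors of $v$ on one side of a homogeneous pair), any two of them cannot be joined by a short path of length at most $p-1$ without creating a forbidden hole of length at most $2p-1$. Iterating the neighborhood expansion inside the relevant part of the graph then produces a tree-like structure of depth $p-1$ and branching factor $\Omega(t)$, hence of size $\Omega(t^{p-1})$. Since this structure must fit inside the $O(k)$-sized kernel provided by the Turing-kernel of Corollary~\ref{thm:BetterTuring}, we obtain $t^{p-1} = O(k)$, and therefore $t = O(k^{1/(p-1)})$.

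Plugging the refined bound $t = O(k^{1/(p-1)})$ into the branching analysis replaces the $2^{O(k) \cdot k}$ factor by $2^{O(k) \cdot k^{1/(p-1)}} = 2^{O(k \cdot k^{1/(p-1)})}$, while the polynomial factor $n^7$ is unaffected, because the overall pipeline (reduction to basic graphs, treatment of homogeneous pairs, kernelization) is exactly the one developed for Theorem~\ref{thm:BetterAlgo}. The main obstacle will be to pinpoint the exact enumeration step in the basic bull-free routine that controls the $k^2$ factor and to verify that the object whose size is being bounded really embeds a tree-like structure large enough to trigger the Moore-style inequality; in particular, care must be taken that the triangles still permitted in the class (we only exclude holes, i.e.\ induced cycles of length at least $4$) do not collapse the branches of the tree and spoil the counting. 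Once this identification is done, the remainder is a direct quantitative substitution, and the asymptotic tightness in $k$ follows from the ETH lower bound in $\{\bull,C_4,\ldots,C_{2p-1}\}$-free graphs announced in the abstract.
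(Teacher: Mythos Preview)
Your proposal misidentifies where the $k^2$ exponent comes from, and consequently applies the Moore-type idea in the wrong place. In the algorithm of Theorem~\ref{thm:BetterAlgo} there is no ``per-vertex branching with $O(k)$ candidates'' that multiplies out to $2^{O(k)\cdot k}$. The $2^{O(k^2)}$ arises because the Turing-kernel has $O(k^2)$ \emph{vertices}: in a basic trigraph $T\in\mathcal{T}_1$ one writes $V(T)=X\cup K_1\cup\cdots\cup K_t$ with $T[X]$ triangle-free, uses Ramsey to get $|X|\le g(k)=\binom{k+1}{2}-1$ (else $\alpha(T)\ge k$), and then Lemma~\ref{lemma:NEWLemma6.1} gives $|V(T)|\le 5|X|=O(k^2)$; brute force on this kernel yields $2^{O(k^2)}$. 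So the quantity to improve is $|X|$, not a branching factor attached to each vertex of a partial solution. Note also that the Turing-kernel of Corollary~\ref{thm:BetterTuring} has size $O(k^2)$, not $O(k)$ as you write; with your inequality $t^{p-1}=O(\text{kernel size})$ this would only give $t=O(k^{2/(p-1)})$, which is not the claimed bound.

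The paper's actual argument keeps the algorithm unchanged and replaces the Ramsey bound on $|X|$ by a sharper one exploiting the girth. Since $T[X]$ is triangle-free by the definition of $\mathcal{T}_1$, and the input excludes $C_4,\ldots,C_{2p-1}$, the induced subgraph $T[X]$ has girth at least $2p$; your worry about ``triangles still permitted in the class'' is therefore moot in the part that matters. One then proves (Lemma~\ref{lem:highGirth}) that any graph of girth $\ge 2p$ on at least $k(k^{1/(p-1)}+2)$ vertices has $\alpha\ge k$: greedily peel low-degree vertices, and if what remains is nonempty it has minimum degree $>k^{1/(p-1)}$, so the BFS layers $N_1,\ldots,N_{p-1}$ around any vertex are independent sets growing by a factor $k^{1/(p-1)}$ each step, giving $|N_{p-1}|\ge k$. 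This yields $|X|\le g_p(k)=O(k\cdot k^{1/(p-1)})$ and hence a kernel of that size. Your Moore-type intuition is exactly the right tool, but it must be aimed at bounding $|X|$ itself, not at a nonexistent per-vertex enumeration.
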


In the same way as Corollary~\ref{thm:BetterTuring} follows from Theorem~\ref{thm:BetterAlgo}, from Theorem~\ref{thm:FasterAlgoNoHoles} we obtain the following corollary. It is worth noting that the multipartite construction given in~\cite[Theorem 7.1]{ThomasseTV13} for ruling out the existence of polynomial kernels actually preserves the property of being $\{\bull,C_4,\ldots,C_{2p-1}\}$-free for $p \geq 3$.

\begin{corollary}\label{thm:FasterTuringNoHoles} For each integer $p \geq 3$, there exists a Turing-kernel of size $O(k \cdot k^{\frac{1}{p-1}})$ for \textsc{Weighted Independent Set} in the class of $\{\bull,C_4,\ldots,C_{2p-1}\}$-free graphs.
\end{corollary}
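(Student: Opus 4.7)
The plan is to mimic closely how Corollary~\ref{thm:BetterTuring} is derived from Theorem~\ref{thm:BetterAlgo}, which in turn follows the Turing-kernelization strategy of Thomass\'e, Trotignon and Vuskovic stated in~\cite[Theorem 7.2]{ThomasseTV13}. That argument has two ingredients: first, an inspection of the FPT algorithm shows that its recursive branching never produces ``large'' instances except when the graph itself already has a size bounded by the decomposition parameter; second, whenever an instance reaches that bounded size, the Turing oracle is invoked. The size of the resulting Turing-kernel is therefore governed directly by the size bound on ``basic'' leaves of the recursion, which in~\cite{ThomasseTV13} is $O(k^5)$ and in our setting of Theorem~\ref{thm:BetterAlgo} is $O(k^2)$.

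First I would verify that the algorithm underlying Theorem~\ref{thm:FasterAlgoNoHoles} has exactly the same recursive skeleton: a branching on structural decompositions of bull-free graphs into basic pieces, followed by the treatment of each piece. All modifications introduced to handle the extra exclusion of short holes $C_4, \ldots, C_{2p-1}$ act only on the analysis of a basic piece, where the bound on the number of relevant vertices drops from $k^2$ to $k \cdot k^{1/(p-1)}$. Hence the recursion tree, together with the polynomial-time reductions between a node and its children, is the same as in~\cite{ThomasseTV13}, and the Turing-kernelization procedure of~\cite[Theorem 7.2]{ThomasseTV13} can be applied verbatim, simply plugging in the new size bound as the threshold at which the oracle is called.

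Next I would argue that there is no loss when restricting the oracle to $\{\bull,C_4,\ldots,C_{2p-1}\}$-free graphs. The subinstances handed to the oracle are induced subgraphs of the original input (possibly with a modified weight function of the same kind used in~\cite{ThomasseTV13}), so the forbidden induced subgraphs are preserved. Therefore the oracle queries stay inside the class $\{\bull,C_4,\ldots,C_{2p-1}\}$-free graphs, as required for a Turing-kernel within this class. Putting these two observations together yields a Turing-kernel of size $O(k \cdot k^{1/(p-1)})$.

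The only real subtlety, and the place where I would need to be a little careful, is checking that in the improved procedure for basic pieces the extra vertex set handed to the oracle, together with the reweighting encoding the rest of the graph, still fits within the announced $O(k \cdot k^{1/(p-1)})$ bound and does not inadvertently blow up by an extra factor depending on $p$. This is a direct bookkeeping check once Theorem~\ref{thm:FasterAlgoNoHoles} is in place and follows the same template as the corresponding verification in~\cite[Section 7]{ThomasseTV13}, so no additional idea beyond Theorem~\ref{thm:FasterAlgoNoHoles} itself is needed to conclude the corollary.
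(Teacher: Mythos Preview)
Your proposal is correct and matches the paper's approach: the paper's own proof is literally the one sentence ``In the same way as Corollary~\ref{thm:BetterTuring} follows from Theorem~\ref{thm:BetterAlgo}, from Theorem~\ref{thm:FasterAlgoNoHoles} we obtain the following corollary,'' and your write-up is a faithful expansion of that remark. One small inaccuracy worth tightening: the block $T_X$ handed to the oracle is not strictly an induced subgraph of the input (it carries the added switchable pair $\{c,d\}$ and may therefore contain a $C_4$), but the paper already disposes of this at the end of Section~\ref{sec:fasterFPTAlgo} by deleting one of $c,d$ before applying the size bound, so your appeal to Theorem~\ref{thm:FasterAlgoNoHoles} covers it.
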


Finally, we provide lower bounds on the running time on any FPT algorithm that solves \textsc{Weighted Independent Set} in the class of $\{\bull,C_4,\ldots,C_{2p-1}\}$-free graphs, for $p \geq 3$. These lower bounds rely on the Exponential Time Hypothesis (ETH), which states that there exists a positive real number $s$ such that {\sc 3-CNF-Sat} with $n$ variables and $m$ clauses cannot be solved in time $2^{sn}\cdot (n+m)^{O(1)}$ (see~\cite{LokshtanovMS11} for more details).

\begin{theorem}\label{thm:AsymptOptimal} For each integer $p \geq 3$, \textsc{Weighted Independent Set} cannot be solved in time $2^{o(k)} \cdot n^{O(1)}$  in the class of $\{\bull,C_4,\ldots,C_{2p-1}\}$-free graphs unless the ETH fails.
\end{theorem}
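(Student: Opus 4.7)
The plan is to give a linear-in-$n$ reduction from \textsc{Independent Set} on $n$-vertex cubic (3-regular) graphs, a problem which under the ETH (via the Sparsification Lemma combined with the standard bounded-occurrence $3$-\textsc{SAT} and degree-reduction constructions) admits no $2^{o(n)} \cdot n^{O(1)}$ algorithm. Fix once and for all an even constant $t = t(p)$ with $3(t+1) \geq 2p$ — say $t = 2p$. Given a cubic instance $(G,k)$ with $n$ vertices and $m = 3n/2$ edges, I would let $G'$ be obtained from $G$ by subdividing every edge exactly $t$ times, and assign weight $1$ to every vertex of $G'$.

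First I would verify that $G'$ is $\{\bull, C_4, \ldots, C_{2p-1}\}$-free. Since $t \geq 1$, no two originally adjacent vertices of $G$ remain adjacent in $G'$, and every subdivision vertex has degree exactly $2$ with both neighbours on the same subdivided edge; a short case analysis on the three possible types of a pairwise adjacent triple then rules out any triangle, so $G'$ is triangle-free and in particular $\bull$-free. Moreover, contracting each subdivided edge sends any cycle of $G'$ to a cycle of $G$ whose length is multiplied by $t+1$; hence every cycle of $G'$ has length at least $3(t+1) \geq 2p$, so $G'$ contains no induced $C_\ell$ for $4 \leq \ell \leq 2p-1$.

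Next I would invoke Poljak's subdivision identity~\cite{Poljak74}: subdividing each edge of $G$ exactly $2r$ times yields $\alpha(G') = \alpha(G) + r m$, which is easily checked by a local exchange on each subdivided $P_{2r+2}$. Writing $t = 2r$, the instance $(G,k)$ is a \YES-instance of \textsc{Independent Set} iff $(G', k+rm)$ is a \YES-instance of \textsc{Weighted Independent Set}. Since $G$ is cubic, $n' := |V(G')| = n + tm = O(n)$ and $k' := k + rm = O(n)$. A hypothetical $2^{o(k)} \cdot n^{O(1)}$ algorithm for \textsc{Weighted Independent Set} on $\{\bull, C_4, \ldots, C_{2p-1}\}$-free graphs would then solve $(G', k')$ in time $2^{o(k')} \cdot (n')^{O(1)} = 2^{o(n)} \cdot n^{O(1)}$, yielding an algorithm of the same running time for cubic \textsc{Independent Set} and contradicting the ETH. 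The one delicate point — and what I see as the main obstacle — is choosing $t$ simultaneously even (so that Poljak's identity applies verbatim) and large enough that $3(t+1) \geq 2p$; starting from cubic instances (so that $m = \Theta(n)$) is what keeps $k' = \Theta(n')$ and ensures that the ETH bound transfers.
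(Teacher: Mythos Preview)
Your proof is correct and follows essentially the same approach as the paper: start from an ETH-hard bounded-degree instance, subdivide each edge an even number of times large enough to force girth $\geq 2p$ (hence $\{\bull,C_4,\ldots,C_{2p-1}\}$-free), and use Poljak's identity $\alpha(G')=\alpha(G)+rm$ so that the new parameter stays $\Theta(n)$. The only cosmetic difference is the starting point: you reduce directly from cubic \textsc{Independent Set}, whereas the paper goes through \textsc{Sparse-3-Sat} and the classical clause-triangle graph $G_\phi$ (which also has bounded degree), and then proves the subdivision identity from scratch as Claim~\ref{claim:2} rather than citing Poljak.
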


Note that as $p$ grows, the running time of the algorithm of Theorem~\ref{thm:FasterAlgoNoHoles} tends to $2^{O(k)} \cdot n^7$. As the lower bound given by Theorem~\ref{thm:AsymptOptimal} holds for {\sl any fixed} integer $p \geq 3$, it follows that, as $p$ grows, the running time of the algorithm of Theorem~\ref{thm:FasterAlgoNoHoles} is asymptotically {\sl tight} with respect to the parameter $k$.

\vspace{-.15cm}
\paragraph{\textbf{\emph{Organization of the paper}}.} In Section~\ref{sec:prelim} we state some definitions and results from~\cite{ThomasseTV13} that we need in the remaining sections. Section~\ref{sec:improvedFPT} is devoted to the proof of Theorem~\ref{thm:BetterAlgo}. In Section~\ref{sec:fasterFPT} we focus on bull-free graphs without small holes and prove Theorems~\ref{thm:FasterAlgoNoHoles} and~\ref{thm:AsymptOptimal}. Finally, we conclude with some directions for further research in Section~\ref{sec:conclusions}. Due to space limitations, the proofs of the results marked with `$[\star]$' have been moved to the appendix.

%\vspace{-.1cm}
\section{Preliminaries}
\label{sec:prelim}
%\vspace{-.15cm}

All the definitions in this section are taken from~\cite{ThomasseTV13}. We use standard graph-theoretic notation (see~\cite{Die05} for any undefined terminology).

\vspace{-.15cm}
\paragraph{\emph{\textbf{Trigraphs}}.} We need to work with \emph{trigraphs} (see~\cite{Chudnovsky12a}), which are a generalization of graphs in which some edges are left ``undecided''. Formally, a trigraph consists of a finite set $V(T)$ of vertices and an adjacency function $\theta: {V(T) \choose 2} \to \{-1,0,1\}$. Two vertices $u,v \in V(T)$ are \emph{strongly adjacent} (resp. \emph{strongly antiadjacent}, resp. \emph{semiadjacent}) if $\theta(uv) = 1$ (resp. $\theta(uv) = -1$, $\theta(uv) = 0$), and in that case $u$ and $v$ constitute a \emph{strong edge} (resp. \emph{strong antiedge}, \emph{switchable pair}). Two vertices $u,v \in V(T)$ are \emph{adjacent} (resp. \emph{antiadjacent}) if $\theta(uv) \in \{0,1\}$ (resp. $\theta(uv) \in \{-1,0\}$), and in that case we say that there is an \emph{edge} (resp. \emph{antiedge}) between $u$ and $v$. Let $\eta(T)$ (resp. $\nu(T)$, $\sigma(T)$) be the set of strongly adjacent (resp. strongly antiadjacent, semiadjacent) pairs of $T$. That is, a trigraph $T$ is a graph if and only if $\sigma(T) = \emptyset$. For a vertex $v \in V(T)$, $N(v)$ (resp.  $\eta(T)$, $\nu(T)$, $\sigma(T)$) denotes the set of vertices in $V(T) \setminus \{v\}$ that are adjacent (resp. strongly adjacent, strongly antiadjacent, semiadjacent) to $v$. The complement $\overline{T}$ of a trigraph $T$ is the trigraph with $V(\overline{T})=V(T)$ and $\theta(\overline{T}) = - \theta(T)$. A trigraph is \emph{monogamous} if every vertex belongs to at most one switchable pair. Most trigraphs considered in this paper will be monogamous.

For two disjoint non-empty subsets of vertices $A,B$ of $V(T)$, we say that $A$ is \emph{strongly complete} (resp. \emph{strongly anticomplete}) to $B$ if every vertex in $A$ is strongly adjacent (resp. strongly antiadjacent) to every vertex in $B$. A \emph{clique} (resp. \emph{strong clique}, \emph{independent set}, \emph{strong independent set}) in $T$ is a set of vertices that are pairwise adjacent (resp. strongly adjacent, antiadjacent, strongly antiadjacent). When we speak about the \textsc{Weighted Independent Set} problem in a trigraph $T$, we are interested in finding an independent set in $T$. We denote by $\alpha(T)$ the maximum weight of an independent set in $T$ (see~\cite{ThomasseTV13} for the precise restrictions of the weight functions defined in trigraphs).

A \emph{realization} of a trigraph $T$ is any trigraph $T'$ such that $\eta(T)\subseteq\eta(T')$,
$\nu(T)\subseteq\nu(T')$, and $\sigma(T') = \emptyset$ (hence $T'$ is a graph). Seen as a trigraph, the \emph{bull} is defined as in Fig.~\ref{fig:bull_figure}, where the corresponding vertices are adjacent or antiadjacent (that is, switchable pairs are allowed). A trigraph is \emph{bull-free} if no induced subtrigraph of it is a bull.

\vspace{-.2cm}
\paragraph{\emph{\textbf{Decomposition of bull-free trigraphs}}.} The algorithm of~\cite{ThomasseTV13}, hence ours as well, is based on a decomposition theorem of bull-free trigraphs that is a simplified version of the one given by  Chudnovsky~\cite{Chudnovsky12a,Chudnovsky12}, and that we proceed to state. We first need two more definitions that will play a fundamental role.

A set $X\subseteq V(T) $ is a \emph{homogeneous set} if $1<\left|X\right|<|V(T)|$ and every vertex in $V(T) \setminus X$ is either strongly complete or strongly anticomplete to $X$. Thus, $V(T) \setminus X$ can be partitioned into two (possibly empty) sets $Y$ and $Z$ such that $X$ is strongly complete to $Y$ and strongly anticomplete to $Z$; see Fig.~\ref{homSetPair} for an illustration, where a solid line means that there are all edges, no line means that there are no edges, and a dashed line means that there is no restriction.

\begin{figure}[tbp]
	\centering \vspace{-.75cm}
		\includegraphics[width=2.4cm]{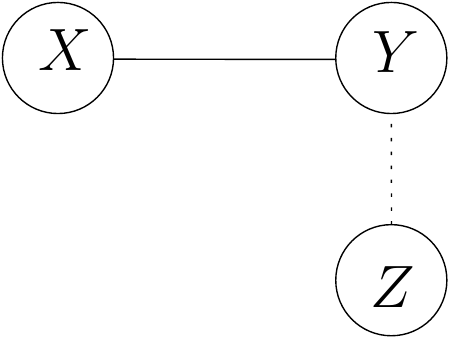}$\ \ \ \ \ \ \ \ \ \ \ \  \ \ \ \ \ \ $\includegraphics[width=3.7cm]{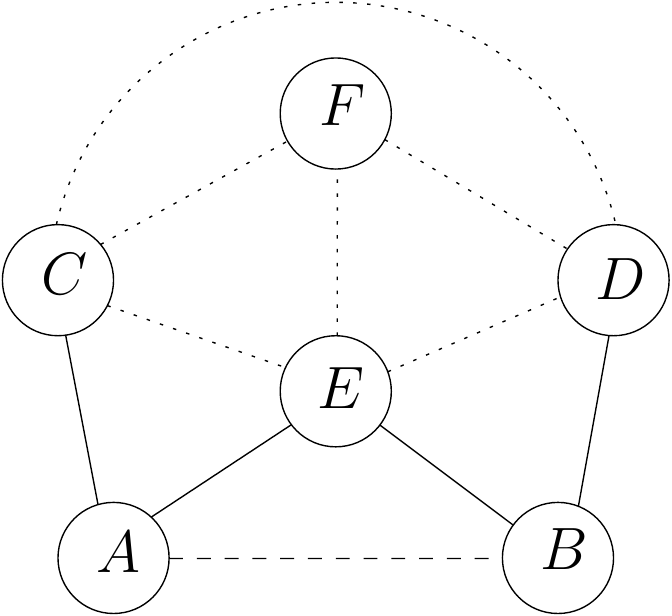}
	\caption{A homogeneous set $X$ and a homogeneous pair $(A,B)$.}
	\label{homSetPair}\vspace{-.25cm}
\end{figure}

A \emph{homogeneous pair} in $T$ is a pair $(A,B)$ of disjoint non-empty subsets of $V(T)$ such that there exist disjoint (possibly empty) subsets $C,D,E,F$ of $V(T)$ such that the following hold:
\begin{itemize}\vspace{-.1cm}
		\item[$\bullet$] $\left\{A,B,C,D,E,F\right\}$ is a partition of $V(T)$;
		\item[$\bullet$] $\left|A\cup B\right|\geq 3$;
		\item[$\bullet$] $\left|C\cup D\cup E\cup F\right|\geq 3$;
		\item[$\bullet$] $A$ is strongly complete to $C\cup E$ and strongly anticomplete to $D\cup F$;
        \item[$\bullet$] $B$ is strongly complete to $D\cup E$ and strongly anticomplete to $C\cup F$; and
         \item[$\bullet$] $A$ is not strongly complete nor strongly anticomplete to $B$.
	\end{itemize}%\vspace{-.1cm}

See again Fig.~\ref{homSetPair} for an illustration. A homogeneous pair is \emph{small} if $\left|A\cup B\right|\leq 6$, and it is \emph{proper} if $C\neq \emptyset$ and $D\neq \emptyset$.

We now define some classes of so-called \emph{basic} trigraphs which will also play an important role in the algorithms. Let $\mathcal{T}_0$ be the class of monogamous trigraphs on at most 8 vertices. Let $\mathcal{T}_1$ be the class of monogamous trigraphs $T$ whose vertex set can be partitioned into (possibly empty) sets $X,K_1,\ldots,K_t$ such that $G[X]$ is triangle free, and $K_1,\ldots,K_t$ are strong cliques that are pairwise anticomplete. According to Chudnovsky's work~\cite{Chudnovsky12a,Chudnovsky12}, the trigraphs in $\mathcal{T}_1$ satisfy some additional conditions that we will detail in Section~\ref{sec:improvedFPT}. This closer look at the class $\mathcal{T}_1$ allows us to significantly improve the dependency on $k$ of the algorithm. Finally, let $\overline{\mathcal{T}}_1 = \{\overline{T}: T \in \mathcal{T}_1\}$. A trigraph is \emph{basic} if it belongs to $\mathcal{T}_0 \cup \mathcal{T}_1 \cup \overline{\mathcal{T}}_1$. We are ready to state the decomposition theorem.

\vspace{-.1cm}

\begin{theorem}[Chudnovsky~\cite{Chudnovsky12a,Chudnovsky12}]
\label{thm:decomposition}
If $T$ is a bull-free monogamous trigraph, then one of the following holds: \vspace{-.1cm}
\begin{itemize}
	\item[$\bullet$] $T$ is basic;
	\item[$\bullet$] $T$ has a homogeneous set;
	\item[$\bullet$] $T$ has a small homogeneous pair; or
	\item[$\bullet$] $T$ has a proper homogeneous pair.
\end{itemize}\vspace{-.15cm}
\end{theorem}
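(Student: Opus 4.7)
The plan is to reduce \textsc{3-Sat} to (unweighted) \textsc{Independent Set} in the target class via a subdivision construction whose solution size grows only linearly in~$n$. By the sparsification lemma, the ETH implies that \textsc{3-CNF-Sat} cannot be solved in $2^{o(n)} \cdot n^{O(1)}$ even when restricted to instances with $m = O(n)$ clauses. From such a sparse instance $\varphi$, I would build the textbook \textsc{Independent Set} graph $G$: create one triangle per clause on three vertices labeled by its literals, and add an edge between each pair of occurrences of complementary literals in different clauses. Then $|V(G)| = 3m = O(n)$, $|E(G)| = O(n)$, and $\alpha(G) = m$ iff $\varphi$ is satisfiable.

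Next, fix $s := 2p$ and let $G'$ be obtained from $G$ by subdividing every edge exactly $s$ times, i.e., replacing each edge $uv$ by a path $u,w_1,\ldots,w_s,v$ on $s$ new internal vertices. Every cycle of $G'$ projects to a cycle of $G$ whose length is multiplied by $s+1 = 2p+1$; since the shortest cycle of $G$ has length $3$, the girth of $G'$ is at least $3(2p+1) > 2p$. In particular $G'$ is triangle-free, hence bull-free, and contains no hole of length at most $2p-1$, so $G'$ lies in the class of $\{\bull,C_4,\ldots,C_{2p-1}\}$-free graphs. Its size is $|V(G')| = |V(G)| + s\cdot|E(G)| = O(pn) = O(n)$, with $p$ treated as a fixed constant.

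A short case analysis on a single subdivided edge $e = uv$ with internal path $w_1,\ldots,w_{2p}$ shows that the maximum number of $w_i$'s that an independent set $S'$ of $G'$ can contain equals $p$ whenever at most one of $u,v$ is in $S'$, and equals $p-1$ when both are in $S'$; this is where taking $s$ even is convenient. Summing over the $m$ subdivided edges and setting $S := S' \cap V(G)$ gives
\[
|S'| \;\leq\; |S| + mp - |\{uv \in E(G) : u,v \in S\}| \;\leq\; \alpha(G) + mp,
\]
and the upper bound is attained by lifting any maximum independent set of $G$. Hence $\alpha(G') = \alpha(G) + mp$, so $\varphi$ is satisfiable iff $G'$ admits an independent set of size $k' := m(p+1) = O(n)$.

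Consequently, a hypothetical $2^{o(k)}\cdot n^{O(1)}$ algorithm for \textsc{Weighted Independent Set} on the class $\{\bull,C_4,\ldots,C_{2p-1}\}$-free graphs (which subsumes the unweighted case via unit weights) would decide $\varphi$ in time $2^{o(k')}\cdot |V(G')|^{O(1)} = 2^{o(n)}\cdot n^{O(1)}$, contradicting the ETH. The main obstacle is the parity-robust counting in the third step: the choice $s=2p$ (even) is what makes the identity $\alpha(G') = \alpha(G) + mp$ hold cleanly while simultaneously ensuring the girth condition needed to forbid $C_4,\ldots,C_{2p-1}$; everything else (bull-freeness, size bounds, ETH conclusion) is then routine.
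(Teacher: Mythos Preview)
Your write-up does not address the stated theorem at all. The statement you were asked to prove is Chudnovsky's structural decomposition theorem for bull-free monogamous trigraphs (Theorem~\ref{thm:decomposition}): every such trigraph is basic, or has a homogeneous set, or a small homogeneous pair, or a proper homogeneous pair. What you wrote is instead an ETH-based lower bound argument --- essentially a proof sketch for Theorem~\ref{thm:AsymptOptimal}. These are entirely unrelated statements: one is a purely structural result about trigraphs, the other is a complexity lower bound proved by reduction. Nothing in your reduction touches homogeneous sets, homogeneous pairs, or the classes $\mathcal{T}_0,\mathcal{T}_1,\overline{\mathcal{T}}_1$, so it cannot establish the claimed dichotomy.

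For the record, the paper does not prove Theorem~\ref{thm:decomposition} either: it is quoted from Chudnovsky~\cite{Chudnovsky12a,Chudnovsky12} and used as a black box. So the correct ``proof'' here is simply a citation, not a new argument. If your intent was actually to prove Theorem~\ref{thm:AsymptOptimal}, your approach is close to the paper's (sparse \textsc{3-Sat}, clause triangles, edge subdivision to raise the girth), but note two slips: you write ``summing over the $m$ subdivided edges'' and set $k' = m(p+1)$, yet $m$ is the number of clauses, not $|E(G)|$; the correct target is $k' = m + p\cdot|E(G)|$. This does not affect the asymptotics since $|E(G)| = O(n)$, but the displayed inequality and the value of $k'$ as written are incorrect.
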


We say that $(X,Y)$ is a \emph{decomposition} of a trigraph $T$ if $(X,Y)$ is a partition of $V(T)$ and either $X$ is a homogeneous cut of $T$ or $X = A \cup B$ where $(A,B)$ is a small or proper homogeneous pair of $T$. A decomposition $(X,Y)$ defines two \emph{blocks} $T_X$ and $T_Y$, whose definition is omitted here, and can be found in~\cite{ThomasseTV13}. A decomposition $(X,Y)$ is a \emph{homogeneous cut} if $X$ is a homogeneous set or $X = A \cup B$ where $(A,B)$ is a proper homogeneous pair. A homogeneous cut $(X,Y)$ is \emph{minimally-sided} if there is no homogeneous cut $(X',Y')$ with $X' \subsetneq X$.

\vspace{-.35cm}
\section{An improved FPT algorithm in bull-free graphs}
\label{sec:improvedFPT}
\vspace{-.15cm}

In this section we give a proof of Theorem~\ref{thm:BetterAlgo}.  We start by providing a high-level description of the FPT algorithm of~\cite{ThomasseTV13}  in Algorithm~\ref{algo:sketch} below (without giving all the details), which will help us to point out the steps for which we provide an improvement.

%Namely, the algorithm works as follows.

%think that it is not necessary to reproduce here the whole algorithm of~\cite{ThomasseTV13}, as we just provide local improvements without modifying the overall mainstream of the algorithm. \ig{ok?}

%provide all the ingredients to sketch the algorithm given in, and we will detail our improvements to obtain %the claimed running time.

%\vspace{.20cm}
%\noindent\textbf{Input:} A bull-free trigraph $T$ with $|V(T)|=n$ and the parameter $k$.\\
%\textbf{Output:} `\textsc{Yes}' if $\alpha(T)\geq k$, and an independent set of weight $\alpha(T)$ otherwise.\vspace{-.15cm}
%\begin{enumerate}
%\item If $T$ is basic, then the problem can be solved in time $O(n^4m) + 2^{O(k^5)}$, where $m$ is the number of strong edges in $T$.
%\item Otherwise, by Theorem~\ref{thm:decomposition}, $T$ admits a decomposition. Furthermore, it is shown that $T$ admits a so-called \emph{extreme} decomposition, which is a decomposition $(X,Y)$ such that the block $T_X$ is basic and both $T_X$ and $T_Y$ are bull-free trigraphs. This extreme decomposition can be found in time $O(n^8)$.
%	\begin{itemize}	
%	\item[2.1.] First, Step~1 is run on the basic bull-free trigraph $T_X$. If $\alpha(T_X) \geq k$, we answer `\textsc{Yes}' and we stop the algorithm. Otherwise, we use the performed computations to build the weighted trigraph $T_Y$.
%     \item[2.2.] The whole algorithm is run recursively on the bull-free trigraph $T_Y$.		
%	\end{itemize}
%\end{enumerate}

\vspace{-.4cm}
\begin{algorithm}[h!]
\DontPrintSemicolon
\KwIn{A bull-free trigraph $T$ with $|V(T)|=n$ and the parameter $k$.}
\KwOut{`\textsc{Yes}' if $\alpha(T)\geq k$, and an independent set of weight $\alpha(T)$ otherwise.}
\begin{enumerate}
\item If $T$ is basic, then the problem can be solved in time $O(n^4m) + 2^{O(k^5)}$, where $m$ is the number of strong edges in $T$.
\item Otherwise, by Theorem~\ref{thm:decomposition}, $T$ admits a decomposition. Furthermore, it is shown that $T$ admits a so-called \emph{extreme} decomposition, which is a decomposition $(X,Y)$ such that the block $T_X$ is basic and both $T_X$ and $T_Y$ are bull-free trigraphs. This extreme decomposition can be found in time $O(n^8)$.
	\begin{itemize}	
	\item[2.1.] First, Step~1 is run on the basic bull-free trigraph $T_X$. If $\alpha(T_X) \geq k$, we answer `\textsc{Yes}' and we stop the algorithm. Otherwise, we use the performed computations to build the weighted trigraph $T_Y$.
     \item[2.2.] The whole algorithm is run recursively on the bull-free trigraph $T_Y$.		
	\end{itemize}
\end{enumerate}
\caption{Sketch of the FPT algorithm of~\cite{ThomasseTV13}.}\label{algo:sketch}
\end{algorithm}

As the size of the trigraph $T_Y$ strictly decreases in each recursive step, the overall complexity of Algorithm~\ref{algo:sketch} is easily seen to be upper-bounded by $2^{O(k^{5})}\cdot n^{9}$. (In fact, the algorithm of~\cite{ThomasseTV13} starts by trying to find a decomposition of $T$, and if it fails we know  by Theorem~\ref{thm:decomposition} that $T$ is basic. We reversed the steps in this sketch for the sake of presentation.) Our improvements are the following:
\begin{itemize}
\item[(i)] \textbf{Improvement in terms of the graph size}. We show that in Step~2, an extreme decomposition $(X,Y)$ of $T$ can be found in time $O(n^6)$.
\item[(ii)] \textbf{Improvement in terms of the parameter}. We show that in Step~1, the problem can be solved in basic trigraphs in time $O(n^4m) + 2^{O(k^2)}$.
\end{itemize}

The two improvements above yield the running time given in Theorem~\ref{thm:BetterAlgo}. We now proceed to explain these improvements in detail.

\paragraph{\textbf{\emph{Improvement in terms of the graph size}}.} Our first ingredient  is the following polynomial-time algorithm running in time $O(n^6)$, which should be compared to the algorithm given by~\cite[Theorem 4.3]{ThomasseTV13} that runs in time $O(n^8)$.

%This improvement is what allows us to reduce the time complexity of the FPT algorithm in terms of the graph size from $O(n^9)$ to $O(n^7)$ (see Theorem~\ref{thm:BetterAlgo}).

\begin{theorem}
\label{thm:4.3} There is an algorithm running in time $O(n^6)$ whose input is a trigraph $T$. The output is a small homogeneous pair of $T$ if some exists. Otherwise, if $G$ has a homogeneous cut, then the output is a minimally-sided homogeneous cut. Otherwise, the output is: ``$T$ has no small homogeneous pair, no proper homogenous pair, and no homogenous set''.
\end{theorem}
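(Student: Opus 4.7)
My plan follows the three-phase scheme of~\cite[Theorem 4.3]{ThomasseTV13}, reimplementing each phase to run in time $O(n^6)$ rather than $O(n^8)$. In phase~(i), I search for a small homogeneous pair by enumerating every subset $S\subseteq V(T)$ with $|S|\leq 6$ together with every bipartition $(A,B)$ of $S$ into two nonempty parts, giving $O(n^6)$ candidates. Each candidate is tested by classifying the vertices of $V(T)\setminus S$ into the four classes $C,D,E,F$ from the definition of a homogeneous pair and rejecting immediately if some vertex is neither strongly complete nor strongly anticomplete to $A$ or to $B$. After an $O(n^2)$ preprocessing of the adjacency table of $T$, each classification of a single vertex against a candidate is $O(1)$; grouping the $O(1)$ bipartitions of each $S$ and batching the classification brings this phase within $O(n^6)$.

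In phase~(ii), I invoke a standard modular-decomposition routine to find some homogeneous set (if one exists) in time $O(n+m)$. If one is returned, a minimally-sided homogeneous set is obtained by iteratively restricting to proper nontrivial modules of the current candidate, all within the $O(n^6)$ budget.

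In phase~(iii), if no small homogeneous pair and no homogeneous set were found, I enumerate seed tuples $(a,b,c,d)$ of four distinct vertices, intended as representatives of the classes $A,B,C,D$ of a proper homogeneous pair. Fixing such a seed forces $A$ to be contained in the set of vertices strongly adjacent to $c$ and strongly antiadjacent to $d$, and symmetrically pins down $B$; a refinement loop, analogous to that of~\cite{ThomasseTV13} but implemented with constant-time adjacency queries, extends the seed to the largest homogeneous pair it supports in $O(n^2)$ time per seed. Summing over the $O(n^4)$ seeds gives $O(n^6)$, and minimality of the returned pair is enforced by an outer minimization over seeds.

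The main difficulty lies in phase~(iii): I must argue that a four-vertex seed always captures enough information to reconstruct a minimally-sided proper homogeneous pair, and that the refinement loop converges to such a pair in $O(n^2)$ time. This rests on the structural constraints of the homogeneous-pair definition together with the observation that, once $(a,b,c,d)$ are fixed, the remaining vertices can be placed canonically by a monotone update rule.
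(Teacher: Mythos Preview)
Your plan follows the right three-phase structure, but two of the phases have genuine running-time gaps.

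\textbf{Phase~(i).} Enumerating all $O(n^6)$ subsets $S$ of size at most $6$ and then, for each candidate, classifying the $n-|S|$ outside vertices costs $\Theta(n)$ per candidate even if each single classification is $O(1)$; this totals $O(n^7)$, not $O(n^6)$. Your ``batching'' remark does not help, because the $O(1)$ bipartitions per subset were never the bottleneck --- the $\Theta(n)$ scan of outside vertices is. The paper avoids this by guessing only $i+j-1\leq 5$ vertices (all of $A$ together with all but one vertex of $B$) and then scanning the remaining vertices once both to locate the missing vertex of $B$ and to verify the pair. This yields $O(n^{i+j-1})\cdot O(n)=O(n^{i+j})\leq O(n^6)$.

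\textbf{Phase~(iii).} A four-vertex seed $(a,b,c,d)$ with $a\in A$ and $b\in B$ does \emph{not} bootstrap the forcing refinement. A vertex $v\notin A\cup B$ is pulled in only when it is neither strongly complete nor strongly anticomplete to the current $A$, or to the current $B$, or is semiadjacent to a vertex already there. With $A=\{a\}$ and $B=\{b\}$ both singletons, no vertex (other than a switchable partner of $a$ or $b$) satisfies this, so the loop halts immediately and you never discover the remaining vertices of a minimally-sided pair $(A_m,B_m)$. The paper's seed is $(a_1,a_2,c,d)$ with \emph{two} vertices $a_1,a_2\in A$ (one may assume $|A|\geq 2$ w.l.o.g.); vertices that see $a_1$ and $a_2$ differently are then forced, which starts the propagation. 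This still uses only $O(n^4)$ seeds and $O(n^2)$ per refinement, giving $O(n^6)$. Note also that the refinement computes the \emph{smallest} pair forced by the seed, not the ``largest'' one; comparing across seeds then returns a minimally-sided homogeneous cut.
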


The proof of~\cite[Theorem 4.3]{ThomasseTV13} starts by enumerating all sets of vertices of size at most 6 and then it checks whether they define a small homogeneous pair. This procedure takes time $O(n^8)$. Our  first improvement is  a simple algorithm that finds small homogeneous pairs $(A,B)$ in time $O(n^6)$, if there exists one. Without loss of generality, we can assume that $|A|\geq |B|$. The main idea is to fix the vertices of $A$ and then try to find a suitable $B$ verifying $|A\cup B|\leq 6$. While we have not found a small homogeneous pair, we execute Algorithm~\ref{algo:smallHomPair} below for all possible pairs of positive integers $(i,j)$ such that $3\leq i+j \leq 6$ and $j\leq i$ (note that there are at most 8 such pairs), in lexicographic order for $i \in \{2,\ldots,5\}$ and $j \in \{1,\ldots,\min\{1,6-i\}\}$.

\begin{algorithm}[th]
\DontPrintSemicolon
\KwIn{A trigraph $T$ on $n$ vertices, two positive integers $i$ and $j$ such that $3\leq i+j\leq 6$ and $i\geq j$, and such that $T$ does not contain a small homogeneous pair $(A',B')$ with $|A'|=i$ and $|B'| < j$.}
\KwOut{A small homogeneous pair $(A,B)$ with $|A|=i$ and $|B|=j$, if it exists.}
\Begin{
	\ForAll{subsets $A\subseteq V$ of size $i$}{
		\ForAll{subsets $B'\subseteq V \backslash A$ of size $j-1$}{
			$B=B'$, $R=V \backslash (A\cup B')$.\;
			\While{$|B|\neq j$ and $R\neq\emptyset$}{
				pick a new vertex $v\in R$ and remove it from $R$.\;
				\If{$v$ is neither strongly complete nor strongly anticomplete to $A$, or neither strongly complete nor strongly anticomplete to $B$}{
					add $v$ to $B$.}
			}
			\If{$|B|=j$ and all vertices of $V \backslash (A\cup B)$ are either strongly complete or strongly anticomplete to A and either strongly complete or strongly anticomplete to B}{return $(A,B)$.}		
		}
	}
}
\caption{Algorithm for finding a small homogeneous pair of size $i+j$.}\label{algo:smallHomPair}
\end{algorithm}

\begin{lemma}\label{lemma:algo1}
Algorithm~\ref{algo:smallHomPair} is correct and runs in time $O(n^6)$. That is, a small homogeneous pair in a trigraph $T$ can be found in time $O(n^6)$, if it exists.
\end{lemma}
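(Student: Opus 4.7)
The plan is to prove both the correctness and the $O(n^6)$ running time. The running-time bound is direct: for each of the constantly many admissible pairs $(i,j)$ with $3 \leq i+j \leq 6$ and $i \geq j$, the algorithm enumerates $O(n^i)$ subsets $A$ and $O(n^{j-1})$ subsets $B'$, and for each of the $O(n^{i+j-1})$ resulting pairs the while loop performs $O(n)$ iterations of $O(1)$ cost each (since $|A|$ and $|B|$ remain bounded by $i+j \leq 6$) plus an $O(n)$ final check. This yields $O(n^{i+j}) \leq O(n^6)$ per pair, hence $O(n^6)$ overall.

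For correctness, I fix a small homogeneous pair $(A^*, B^*)$ with $|A^*|=i$, $|B^*|=j$, and exploit the input guarantee that no small homogeneous pair $(A',B')$ with $|A'|=i$ and $|B'|<j$ exists. The key observation is that, by definition of homogeneous pair, every $v \in V \setminus (A^* \cup B^*)$ is strongly complete or strongly anticomplete to both $A^*$ and $B^*$, and therefore also to any subset of $B^*$. Consequently, during the iteration with $A = A^*$ and $B' \subseteq B^*$ of size $j-1$, the while loop can only add vertices of $B^*$ to $B$. Writing $\{v^*\} = B^* \setminus B'$, I first treat the \emph{good} case, in which $v^*$ is bad for $A^*$ or bad for $B^* \setminus \{v^*\}$: since being bad for $B'$ is preserved as $B$ grows to a superset of $B'$, the vertex $v^*$ is added when picked, $B$ reaches $B^*$, and the final check succeeds.

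The delicate case is when every $v \in B^*$ is strongly complete or strongly anticomplete to $A^*$ and to $B^* \setminus \{v\}$. Partition $B^* = B_C \cup B_A$ into vertices strongly complete, respectively strongly anticomplete, to $A^*$; the homogeneous-pair condition forces both $B_C$ and $B_A$ to be non-empty. A short verification then shows that, for any $v_c \in B_C$ and $v_a \in B_A$, the pair $(A^*, \{v_c, v_a\})$ is itself a small homogeneous pair: when $j \geq 3$ this contradicts the non-existence of smaller homogeneous pairs, ruling out the delicate case. When $j = 2$, my plan is instead to exhibit a shuffled small homogeneous pair $(A^\dagger, B^\dagger)$ of the same size $(i,2)$ that falls in the good case and is therefore retrieved by the algorithm. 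Constructing $(A^\dagger, B^\dagger)$ by moving a suitable vertex of $A^*$ into the $B$-part and a vertex of $B^*$ into the $A$-part, and verifying that the associated partition classes $(C,D,E,F)$ of $(A^*,B^*)$ remain well-behaved with respect to the new pair, is the main obstacle of the proof; the argument should exploit the fact that $A^*$ is now a homogeneous set together with the uniform adjacencies within $B^*$ already established in this case.
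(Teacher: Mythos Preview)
Your running-time analysis and the key correctness observation coincide with the paper's: with $A = A^*$ and $B' = B^* \setminus \{v^*\}$, every vertex of $V \setminus (A^* \cup B^*)$ is strongly complete or strongly anticomplete to both $A^*$ and $B^*$ (hence also to $B'\subseteq B^*$), so the while loop can only ever pick $v^*$. The paper's proof then dispatches the remaining situation in a single sentence: if no $v \in B^*$ satisfies the insertion condition relative to $A^*$ and $B^* \setminus \{v\}$, then $(A^*, B^* \setminus \{v\})$ is itself a small homogeneous pair with a strictly smaller $B$-side, contradicting the precondition of the call. No case distinction on $j$ is made.

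Your write-up is more cautious and splits off a ``delicate case'', but for $j = 2$ you only sketch a plan (shuffle a vertex between $A^*$ and $B^*$), explicitly label it ``the main obstacle of the proof'', and do not carry it out. As it stands, your proposal is therefore incomplete. Your caution is not misplaced: the paper's one-liner tacitly assumes that $A^*$ remains neither strongly complete nor strongly anticomplete to $B^* \setminus \{v\}$, and this is exactly what may fail when $j = 2$ and $B^* = \{v_c, v_a\}$ with $v_c$ strongly complete and $v_a$ strongly anticomplete to $A^*$. So you have put your finger on a genuine subtlety that the paper's short argument glosses over. However, your proposed shuffling repair is not obviously sound either --- moving some $a \in A^*$ to the $B$-side typically breaks the partition conditions for vertices in $C$ or $D$ (such a vertex is strongly complete to $a$ but strongly anticomplete to $B^*$, or vice versa) --- and you would need to actually resolve this case to have a complete argument.
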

\begin{proof} Suppose that $T$ contains a small homogeneous pair $(A,B)$ such that $|A|=i$ and $|B|=j$, and that $T$ does not contain a small homogeneous pair $(A',B')$ with $|A'|=i$ and $|B'| < j$ (such a pair would have been found in previous iterations). We claim that there exists a vertex $v \in R$ that is neither strongly complete nor strongly anticomplete to $A$, or neither strongly complete nor strongly anticomplete to $B$. Indeed, otherwise $(A,B \setminus \{v\})$ would be a small homogeneous pair, contradicting the conditions of the algorithm. Let $B' = B \setminus \{v\}$. At some point, the algorithm will consider the pair $(A,B')$, and then it will find the corresponding $v$ and check that the found pair is indeed homogeneous. Since $|A| + |B| \leq 6$, these two operations can be done in linear time. Since $i+j-1$ vertices are guessed, the complexity of the algorithm is $O(n^{i+j}) = O(n^6)$, as $i+j\leq 6$. \end{proof}

The second bottleneck in the proof of~\cite[Theorem 4.3]{ThomasseTV13} is a subroutine that finds a minimally-sided proper homogeneous pair, if it exists, in time $O(n^7)$.  We prove the following lemma.

\begin{lemma}$[\star]$
\label{lemma:algo2} There exists an algorithm running in time $O(n^6)$ that finds a minimally-sided homogeneous cut in a trigraph $T$, provided that $T$ has some homogeneous cut.
\end{lemma}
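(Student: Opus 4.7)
I plan to find a minimally-sided homogeneous cut by locating, separately, a minimum-cardinality homogeneous set and a minimum-cardinality proper homogeneous pair (each if one exists), and returning whichever defines the smaller set $X$. Any minimum-cardinality cut is automatically minimally-sided: a proper subset of a minimum-cardinality $X$ would have strictly smaller cardinality and could not itself define a homogeneous cut of either type. Since every homogeneous cut falls into one of these two categories, this strategy returns a minimally-sided homogeneous cut.

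For the homogeneous-set case, I enumerate the $O(n^2)$ unordered pairs $\{u,v\}$ of distinct vertices and, for each, compute the smallest homogeneous set containing $\{u,v\}$ by the standard closure rule: start with $X=\{u,v\}$, and while some vertex outside $X$ is neither strongly complete nor strongly anticomplete to $X$, add it to $X$; stop when no such vertex remains, or when $X=V(T)$ (which means that no homogeneous set contains $\{u,v\}$). Maintaining, for each outside vertex $w$, two counters recording $|\eta(w)\cap X|$ and $|\nu(w)\cap X|$, each insertion into $X$ can be processed in $O(n)$ time, so each starting pair costs $O(n^2)$ and this phase costs $O(n^4)$ in total.

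For the proper-homogeneous-pair case, I enumerate $O(n^4)$ ordered quadruples $(a,b,c,d)$ of distinct vertices intended as witnesses that $a\in A$, $b\in B$, $c\in C$ and $d\in D$. The structural constraints of a homogeneous pair immediately give $A\subseteq \eta(c)\cap \nu(d)$ and $B\subseteq \nu(c)\cap \eta(d)$, while every vertex in $\eta(c)\cap\eta(d)$ or in $\nu(c)\cap\nu(d)$ must lie in $C\cup D\cup E\cup F$; this already restricts the search considerably. I then iteratively grow $A$ and $B$ by applying forcing rules adapted from~\cite{ThomasseTV13}: a vertex whose adjacency pattern to the already-classified vertices is incompatible with lying outside $A\cup B$ is moved to the unique side consistent with its type, and any assignment contradicting the witness conditions aborts the current run. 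Using the same counter technique as before, the forcing loop runs in $O(n^2)$ per configuration, for a total of $O(n^6)$.

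Correctness of the second phase follows from the fact that, for a minimum-cardinality proper homogeneous pair $(A^*,B^*)$ and appropriate witnesses $a\in A^*$, $b\in B^*$, $c\in C^*$, $d\in D^*$, the forcing rules never place into $A\cup B$ a vertex of $V(T)\setminus(A^*\cup B^*)$, so the procedure outputs a proper homogeneous pair of size at most $|A^*\cup B^*|$, and hence equal to $(A^*,B^*)$ by minimality. The main obstacle will be to design the forcing rules for the homogeneous-pair case so that they are simultaneously sound, complete, and implementable in quadratic time per configuration; this requires carefully exploiting the rigid adjacency pattern between $A\cup B$ and the outside sets $C,D,E,F$, together with the constraints imposed by the four witness vertices.
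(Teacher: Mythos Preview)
Your seed choice for the proper-homogeneous-pair phase is the problem. With witnesses $(a,b,c,d)$ and initial sets $A=\{a\}$, $B=\{b\}$, the only sound forcing rule available is ``a vertex that is mixed on the current $A$ or on the current $B$ must lie in $A\cup B$''; the markings induced by $c,d$ tell you \emph{which} side a forced vertex goes to, but they do not by themselves force any vertex. Against a singleton, however, no vertex of a graph is mixed: every vertex is either strongly adjacent or strongly antiadjacent to $a$ (and likewise to $b$). So if $T$ happens to be a graph (no switchable pairs), your closure stays at $(\{a\},\{b\})$ for every quadruple, which violates $|A\cup B|\geq 3$ and the ``$A$ not strongly complete nor strongly anticomplete to $B$'' clause; you never output a proper homogeneous pair. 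Concretely, take any bull-free graph with a proper homogeneous pair $(A^\ast,B^\ast)$, $|A^\ast|=2$, $|B^\ast|=1$: your procedure misses it.

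The paper fixes exactly this by choosing a different set of four witnesses: since $|A^\ast\cup B^\ast|\geq 3$, one side has at least two vertices, and it guesses $a_1,a_2\in A$ together with $c\in C$, $d\in D$ (no vertex of $B$ is guessed). Starting from $R=\{a_1,a_2\}$ the forcing is already nontrivial, because every vertex distinguishing $a_1$ from $a_2$ is pushed into $R$; the first $\beta$-marked vertex discovered becomes the reference $b$ for side $B$. This is the whole content of the improvement over~\cite{ThomasseTV13}, which guessed five vertices $(a_1,a_2,b,c,d)$; the paper shows $b$ can be recovered for free, whereas dropping one of the two $A$-seeds, as you propose, cannot work. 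Your homogeneous-set phase and the overall minimum-cardinality strategy are fine.
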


Lemmas~\ref{lemma:algo1} and~\ref{lemma:algo2} together clearly imply Theorem~\ref{thm:4.3}.

\paragraph{\textbf{\emph{Improvement in terms of the parameter}}.} We now focus on the improvement in Step~1 of Algorithm~\ref{algo:sketch}. It is shown in the proof~\cite[Lemma 6.1]{ThomasseTV13} that \textsc{Weighted Independent Set} restricted to the class $\mathcal{T}_1$ admits a kernel of size $O(k^5)$, and this is what gives the function $2^{O(k^5)}$ in the algorithm of Theorem~\ref{thm:PreviousFPT}, as well as the Turing-Kernel of Corollary~\ref{thm:BetterTuring}. In the following we will show that the kernel in the class $\mathcal{T}_1$ can be improved to $f(k) = O(k^2)$, concluding the proof of Theorem~\ref{thm:BetterAlgo} and of Corollary~\ref{thm:BetterTuring}. This improvement is detailed in the following lemma, which should be compared to~\cite[Lemma 6.1]{ThomasseTV13}. More precisely, in~\cite[Lemma 6.1]{ThomasseTV13} the function $f$ is defined as $f(x)=g(x)+(x-1)(\binom{g(x)}{2}+2g(x)+1)$, where $g(x)=\binom{x+1}{2}-1$. We redefine $f$ as $f(x)=5g(x)$, yielding the desired upper bound.

%We have been able to show that the problem restricted to the class $T_{1}$ (the hard cases) admits a kernel of size $O(k^{2})$.
%
%This improvement is also valid to find a Turing-kernel of size $O(k^{2})$ for the \textsc{Weighted independent set} problem in bull-free graphs.
%
%We improve the complexity of the function $f$ in~\cite[Lemma 6.1]{ThomasseTV13}:

\begin{lemma}\label{lemma:NEWLemma6.1}
There is an $O(n^{4}m)$-time algorithm with the following specifications.
\begin{itemize}
\item[]\textbf{\emph{Input}:} A weighted monogamous basic trigraph $T$ on $n$ vertices and $m$ strong edges, in which all vertices have weight at least 1 and all switchable pairs have weight at least 2, with no homogeneous set, and a positive integer $k$.
\item[]\textbf{\emph{Output}:} One of the following true statements:
\begin{enumerate}
\item $n\leq f(k)$;
\item the number of maximal independent sets in $T$ is at most $n^3$; or
\item $\alpha(T)\geq k$.
\end{enumerate}
\end{itemize}
\end{lemma}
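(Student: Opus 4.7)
The proof follows the blueprint of \cite[Lemma~6.1]{ThomasseTV13}, changing only the analysis that controls $n$ (when neither output~2 nor output~3 is produced) from the original $O(k^5)$ bound down to $5g(k) = O(k^2)$. The algorithm first detects which of $\mathcal{T}_0$, $\mathcal{T}_1$, $\overline{\mathcal{T}}_1$ contains $T$. The case $T \in \mathcal{T}_0$ is immediate since $n \leq 8$, and the case $T \in \overline{\mathcal{T}}_1$ is handled by running the symmetric procedure on $\overline{T} \in \mathcal{T}_1$ (in $T$ the partition yields $\alpha(T[X]) \leq 2$ and pairwise strongly complete strong anticliques, so every independent set uses vertices from at most one of the $K_i$'s plus at most two of $X$, and outputs~1--3 are then easy to decide). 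The substantive case is $T \in \mathcal{T}_1$.

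Fix the partition $V(T) = X \cup K_1 \cup \cdots \cup K_t$ with $T[X]$ triangle-free and the $K_i$'s pairwise strongly anticomplete strong cliques. First I would bound $|X|$: run a greedy maximal-independent-set subroutine on $T[X]$ in time $O(nm)$. Either this returns an independent set of size $\geq k$, giving output~3, or $\alpha(T[X]) \leq k-1$. In the latter case, triangle-freeness of $T[X]$ forces $\Delta(T[X]) \leq \alpha(T[X]) \leq k-1$ (every neighbourhood is independent), and a Caro--Wei/Tur\'an-style estimate gives $|X| \leq g(k)$. Next I would bound the total contribution of the cliques. Since $T$ contains no homogeneous set, every $K_i$ with $|K_i| \geq 2$ must be split by some vertex of $X$; together with the additional Chudnovsky restrictions on how members of $\mathcal{T}_1$ are attached to $X$, and with bull-freeness, this forces each $K_i$ to display one of very few ``$X$-adjacency types''. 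Exploiting this, the plan is to prove that either $\sum_i |K_i| \leq 4g(k)$ (so that $n \leq |X| + \sum_i |K_i| \leq 5g(k) = f(k)$, giving output~1), or we can constructively select one compatible vertex from each of $k$ distinct $K_i$'s to produce an independent set of size $\geq k$ (output~3), or the $K_i$'s fall into few equivalence classes under their $X$-neighbourhoods, in which case every maximal independent set is determined by a bounded-size subset of $X$ together with one representative per class, yielding at most $n^3$ maximal independent sets (output~2).

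The main obstacle is the clique-size step: the original proof of \cite[Lemma~6.1]{ThomasseTV13} uses a fairly loose pigeonhole to obtain the $(k-1)\bigl(\binom{g(k)}{2} + 2g(k) + 1\bigr)$ contribution from the $K_i$'s, and our task is to replace this by $4g(k)$ by squeezing the full structural strength out of the Chudnovsky conditions defining $\mathcal{T}_1$ rather than using only its rough shape. Once the tightened inequality is proven, the algorithmic side is unchanged: the subroutines (greedy independence in $T[X]$, scanning cliques for their $X$-adjacency types, enumerating maximal independent sets when output~2 applies) each run in $O(n^4 m)$ time, so the overall running time matches that of the original lemma. The three-way output then corresponds exactly to the three cases of the refined analysis, completing the proof.
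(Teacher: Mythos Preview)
Your outline follows the right blueprint, and your handling of $\mathcal{T}_0$ and $\overline{\mathcal{T}}_1$ is essentially fine (the paper dispatches $\overline{\mathcal{T}}_1$ in one line by citing \cite[Lemma~5.9]{ThomasseTV13}, which is precisely your parenthetical observation that any independent set sits in at most one $K_i$ plus at most two vertices of $X$, hence there are at most $n^3$ maximal independent sets and Output~2 applies). Your bound $|X|\leq g(k)$ via triangle-freeness is also correct.

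The genuine gap is the clique-size step for $T\in\mathcal{T}_1$. You correctly identify that this is the obstacle and state the target inequality $\sum_i|K_i|\leq 4|X|$, but you do not supply an argument; ``squeezing the full structural strength out of the Chudnovsky conditions'' is a hope, not a proof. The paper uses two specific properties of $\mathcal{T}_1$ from Chudnovsky's description that you do not isolate:
\begin{itemize}
\item[(i)] every vertex of $X$ has neighbours in at most two of the cliques $K_1,\ldots,K_t$, which by double counting gives $\sum_i |N(K_i)|\leq 2|X|$;
\item[(ii)] for each clique $K=\{v_1,\ldots,v_r\}$, the set $N(K)\subseteq X$ is bipartite with parts $A,B$ such that the neighbourhoods $A\cap N(v_i)$ are nested decreasing and $B\cap N(v_i)$ are nested increasing in $i$.
\end{itemize}
From (ii) together with the absence of homogeneous sets (so no two vertices of $K$ have the same neighbourhood in $X$) and monogamy, one shows $|K|\leq 2|N(K)|$ for every clique $K$. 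Combining with (i) yields $\sum_i|K_i|\leq 2\sum_i|N(K_i)|\leq 4|X|$, hence $n\leq 5|X|\leq 5g(k)=f(k)$. Without naming properties (i) and (ii) and carrying out this two-line double count, your proposal does not actually close the gap between the old $O(k^5)$ and the new $O(k^2)$ bounds. Your three-way case split (either bound the cliques, or extract an independent set from them, or fall into few adjacency types) is also unnecessary: once $|X|\leq g(k)$ is known, the clique bound is unconditional and always gives Output~1.
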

\begin{proof} The proof follows closely  that of \cite[Lemma 6.1]{ThomasseTV13}. Let $G$ be the realization of $T$ where all switchable pairs are set to ``strong antiedge''. We first check whether $n \leq f(k)$ in constant time. If this is not the case, we apply~\cite[Theorem 5.4]{ThomasseTV13} to $G$, and check whether Output~2 is true. If not, it just remains to prove that Output~3 is a true statement. The running time of the algorithm is $O(n^{4}m)$.

Since $T$ is basic,  there are three cases to consider. Assume first that $T\in\mathcal{T}_{0}$. If $k\geq 2$, then $f(k)> 8 \geq n$, so the algorithm should have given Output~1, a contradiction. Thus, $k \leq 1$, and Output~3 is true. If $T \in \overline{\mathcal{T}}_{1}$, then by \cite[Lemma 5.9]{ThomasseTV13} $T$ has at most $n^3$ maximal independent sets, so the algorithm should have given Output~2, a contradiction.

Thus, necessarily $T \in \mathcal{T}_{1}$. Suppose for contradiction that $\alpha(T)< k$. We consider the decomposition of $T$ into a triangle-free trigraph $X$ and a disjoint union of $t$ strong cliques $K_1,\ldots,K_t$. In contrast to the proof of \cite[Lemma 6.1]{ThomasseTV13}, we will use the following two properties of the class $\mathcal{T}_{1}$, as described by Chudnovsky~\cite{Chudnovsky12a,Chudnovsky12}:
\begin{itemize}
\item[(i)] Each vertex of $X$ has neighbors in at most two distinct cliques.
\item[(ii)] For each clique $K\in\{K_1,\ldots,K_t\}$, with $K=\{v_1,\ldots,v_r\}$, the neighborhood of $K$ in $T$ is a bipartite trigraph, with bipartition $(A,B)$, such that for all $i\in\{1,\ldots,r\}$, $\mathcal{A}_{i+1}\subseteq \mathcal{A}_{i}$ and $\mathcal{B}_{i}\subseteq \mathcal{B}_{i+1}$, where $\mathcal{A}_{i}=A\cap N(v_i)$ and $\mathcal{B}_{i}=B\cap N(v_i)$ (see Fig.~\ref{fig:figure2}).
\end{itemize}

\begin{figure}[h!]
	\centering \vspace{-.45cm}
		\includegraphics[width=0.66\textwidth]{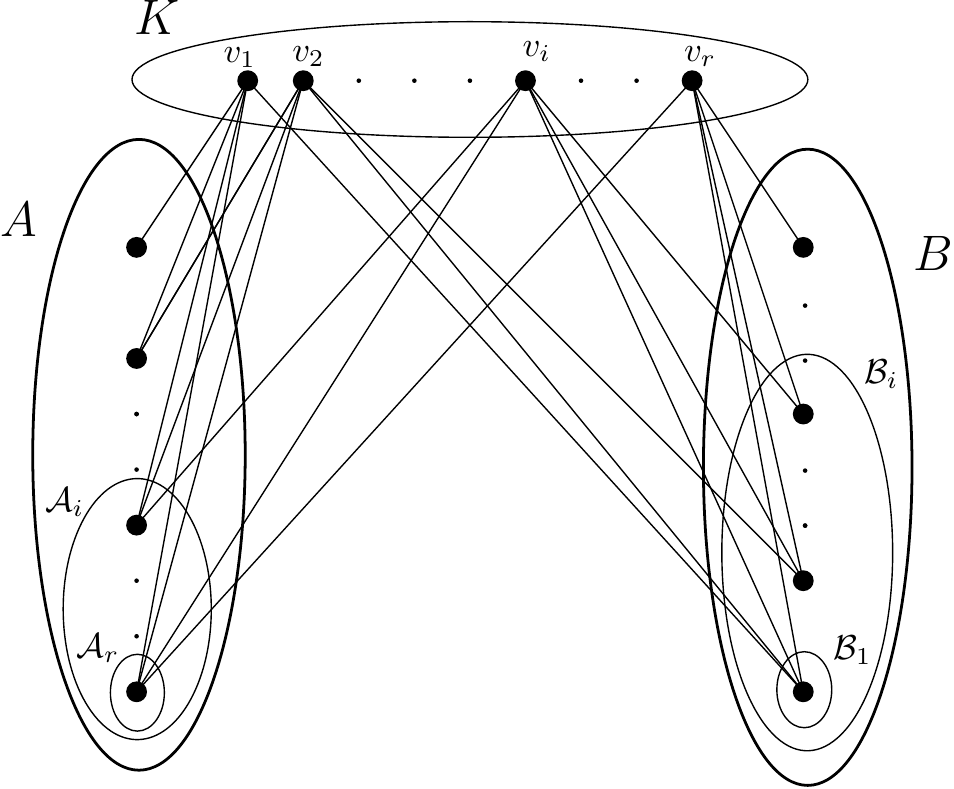}
	\caption{Adjacency between a clique $K$ and the set $X$ in the proof of Lemma~\ref{lemma:NEWLemma6.1}.}
	\label{fig:figure2}
\end{figure}

We can suppose that $|X|\leq g(k)$, otherwise as $T[X]$ is triangle-free, by Ramsey Theorem it follows that $\alpha(G) \geq k$, so we would have that $\alpha(T)\geq \alpha(G) \geq k$.

For $1 \leq i \leq t$, let us denote by $N(K_i)$ the subset of vertices of $X$ that are adjacent to at least one vertex of $K_i$. By Property~(i) above, it holds that
\begin{equation}\label{eq:1}
\sum\limits_{i=1}^{t}{|N(K_i)|}\ \leq\ 2|X|.
\end{equation}

\begin{claimN}\label{claim:1}
For each clique $K\in\{K_1,\ldots,K_t\}$, it holds that $|K|\leq 2|N(K)|$.
\end{claimN}

\begin{proof}
Consider an arbitrary $K\in\{K_1,\ldots,K_t\}$, and let $K=\{v_1,\ldots,v_r\}$. Consider the set $N(K)$ as described by Property~(ii) above. Let us consider $K'=\{v_{i_1},\ldots,v_{i_{r'}}\}$, for $1\leq i_1<i_2<\cdots<i_{r'}\leq r$, the set of vertices in $K$ that do not belong  to any switchable pair. Since $T$ is monogamous, we have that $r-r'\leq|N(K)|$.

Let us note $\mathcal{V}_{j}=\overline{\mathcal{A}}_{i_j}\cup \mathcal{B}_{i_{j}}$, where $\overline{\mathcal{A}}_{i}=A\cap\overline{N(v_i)}$. Note that any two vertices in $K'$ must have a distinct neighborhood, otherwise they form a homogeneous set, a contradiction. Together with Property~(ii), this implies that for all $j\in\{1,\ldots,r'-1\}$, $\mathcal{V}_{j}\subsetneq \mathcal{V}_{j+1}$.

Since $\mathcal{B}_{i}\neq \emptyset$ for all $i\in\{1,\ldots,r\}$, we have that $|\mathcal{V}_{r'}|\geq r'$. And since $\mathcal{V}_{r'}\subseteq N(K')$, we have that $|N(K')|\geq |\mathcal{V}_{r'}| \geq r' =|K'|$.

Therefore, $|K| = r = (r - r ') + r' \leq |N(K)| + |N(K')|\leq 2|N(K)|$, and the claim follows.\end{proof}

Equation~(\ref{eq:1}) and Claim~\ref{claim:1} imply that $\sum\limits_{i=1}^{t}{|K_i|}\leq 4|X|$, and therefore

\vspace{-.15cm}
\begin{equation}\label{eq:2}
|V(T)| \ = \  |X| \ + \ \sum\limits_{i=1}^{t}{|K_i|} \ \leq \ |X| + 4|X| \ = \ 5|X|,
\end{equation}

 that is, $n\leq 5|X|$, and since $|X|\leq g(k)$, the algorithm should have given Output 1, a contradiction.\end{proof}

\vspace{-.45cm}

\section{Independent set in bull-free graphs without small holes}
\label{sec:fasterFPT}

In this section we deal with bull-free graphs without small holes. Namely, we provide a faster FPT algorithm in Subsection~\ref{sec:fasterFPTAlgo} and we prove the lower bound in Subsection~\ref{sec:LowerBound}.

\subsection{Faster FPT algorithm in $\{\bull,C_4,\ldots,C_{2p-1}\}$-free graphs}
\label{sec:fasterFPTAlgo}

In this subsection we prove Theorem~\ref{thm:FasterAlgoNoHoles}. We use the same algorithm described in Section~\ref{sec:improvedFPT} for general bull-free graphs, and the improvement in the time bound for $\{\bull,C_4,\ldots,C_{2p-1}\}$-free graphs consists in a more careful analysis of the kernel size for the basic class $\mathcal{T}_{1}$. More precisely, we will prove that the function $g$ such that $|X| \leq g(k)$ can be redefined as $g_p(x)=x(x^{\frac{1}{p-1}}+2)$. Plugging this function in Equation~(\ref{eq:2}) yields a kernel of size $O(k \cdot k^{\frac{1}{p-1}})$ for the class $\mathcal{T}_{1}$. Indeed, in the proof of Lemma~\ref{lemma:NEWLemma6.1},  if $T$ is a $\{\bull,C_4,\ldots,C_{2p-1}\}$-free trigraph that belongs to the basic class $\mathcal{T}_{1}$, the following lemma implies that in this case it holds that $|X| \leq g_p(k)$, hence proving Theorem~\ref{thm:FasterAlgoNoHoles}. The proof is inspired from classical arguments in Ramsey theory~\cite{Die05} (see also~\cite{Lichiardopol2014} for recent results on the independence number of triangle-free graphs in terms of several parameters).

\vspace{-.15cm}

\begin{lemma}\label{lem:highGirth}
Let $p,k\geq 2$ be two integers and let $G$ be a graph of girth $g(G)\geq 2p$.
If $|V(G)|\geq k(k^{\frac{1}{p-1}}+2)$, then $\alpha(G)\geq k$.
\end{lemma}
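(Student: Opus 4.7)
My plan is to prove the lemma by induction on $k$. The base case $k = 2$ follows from the fact that any triangle-free graph on at least $3$ vertices is not complete, hence contains two non-adjacent vertices. For the inductive step, set $D = \lceil k^{1/(p-1)} \rceil$ and split on the minimum degree $\delta(G)$.

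If $\delta(G) \leq D+1$, I would pick a vertex $v$ of minimum degree and apply the induction hypothesis to $G' = G - N[v]$, which still has girth $\geq 2p$ and at least $n - (D+2)$ vertices. A short computation, using $D \leq k^{1/(p-1)} + 1$ and the monotonicity of $x \mapsto x^{1/(p-1)}$, shows that $n - (D+2) \geq (k-1)((k-1)^{1/(p-1)}+2)$, so the hypothesis is preserved for parameter $k-1$. The induction then produces an independent set of size $k-1$ in $G'$, which together with $v$ yields an independent set of size $k$ in $G$.

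If $\delta(G) \geq D+2$, I would fix any vertex $v$ and analyze the BFS layers $S_i = \{u : d_G(v,u) = i\}$ for $0 \leq i \leq p-1$. The girth hypothesis $g(G) \geq 2p$ forces, via a nearest-common-ancestor argument in the BFS tree, the following three properties: (i) each $S_i$ with $1 \leq i \leq p-1$ is independent, since an internal edge would close a cycle of length at most $2i+1 \leq 2p-1$; (ii) every vertex of $S_i$ with $i \leq p-1$ has a unique neighbor in $S_{i-1}$, else a cycle of length at most $2i \leq 2p-2$; and (iii) for $i \leq p-2$, distinct vertices of $S_i$ have disjoint children sets in $S_{i+1}$, else a cycle of length at most $2i+2 \leq 2p-2$. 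Combining (ii) and (iii), each $u \in S_i$ with $1 \leq i \leq p-2$ contributes exactly $\deg_G(u) - 1 \geq D+1$ pairwise distinct vertices to $S_{i+1}$; starting from $|S_1| \geq D+2$, the layer sizes telescope to
\[
|S_{p-1}| \;\geq\; (D+2)(D+1)^{p-2} \;\geq\; (D+1)^{p-1} \;\geq\; k.
\]
By property (i), $S_{p-1}$ is then an independent set of size at least $k$ in $G$.

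The main obstacle will be the cycle-length bookkeeping in the BFS argument, especially property (iii): the two BFS branches to distinct parents of a common child in $S_{i+1}$ need not meet only at $v$, so one must close the cycle at their nearest common ancestor and carefully verify the resulting length is at most $2p-2$. The algebraic verification needed in the low-degree case is elementary once the rounding from $k^{1/(p-1)}$ to $D$ is handled.
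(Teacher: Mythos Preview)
Your approach is essentially the paper's: deal with low-degree vertices greedily, and in the high-minimum-degree regime exploit the girth bound via the BFS-layer argument that $S_{p-1}$ is a large independent set. The paper packages the low-degree phase as a single loop rather than an induction on $k$: it repeatedly peels a vertex of degree $< k^{1/(p-1)}+1$; if fewer than $k$ vertices were peeled, the total number of removed vertices is strictly less than $k(k^{1/(p-1)}+2)\leq |V(G)|$, so a nonempty subgraph of minimum degree $\geq k^{1/(p-1)}+1$ survives and the BFS argument applies to it. This direct route avoids any rounding of $k^{1/(p-1)}$.

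Your inductive packaging, as written, has an arithmetic slip. With $D=\lceil k^{1/(p-1)}\rceil$ and the split at $\delta(G)\leq D+1$, the ``short computation'' in the low-degree case boils down to $(k-1)\bigl(k^{1/(p-1)}-(k-1)^{1/(p-1)}\bigr)\geq 1$, which is false in general: for $p=10$, $k=3$ one gets $D=2$, the hypothesis only guarantees $n\geq 10$, removing $N[v]$ with $|N[v]|\leq D+2=4$ may leave just $6$ vertices, yet the inductive hypothesis for $k-1=2$ requires at least $2(2^{1/9}+2)>6$. The fix is immediate: take $D=\lfloor k^{1/(p-1)}\rfloor$ instead. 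Then $D\leq k^{1/(p-1)}$ makes the low-degree inequality $n-(D+2)\geq (k-1)k^{1/(p-1)}+2(k-1)\geq (k-1)\bigl((k-1)^{1/(p-1)}+2\bigr)$ trivial, while $D+1>k^{1/(p-1)}$ still yields $(D+1)^{p-1}>k$ in the BFS case. Equivalently, keep the ceiling but move the threshold to $\delta(G)\leq D$ versus $\delta(G)\geq D+1$; or drop the induction altogether and argue directly as the paper does.
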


\begin{proof}
Let $G'=G$ and $S=\emptyset$. While there exists a vertex $v\in V(G')$ such that $\deg_{G'}(v)<(k^{\frac{1}{p-1}}+1)$, we do the following:
\begin{itemize}
\item[$\bullet$] Add $v$ to $S$; and
\item[$\bullet$] Remove $N[v]$ from $G'$.
\end{itemize}
Note that by construction the set $S$ is an independent set in $G$. When there is no such vertex $v\in V(G')$ anymore, there are two possibilities:
\begin{itemize}
\item[$\bullet$] If $|S|\geq k$, we are done.
\item[$\bullet$] Otherwise, since at each step we removed strictly less than $k(k^{\frac{1}{p-1}}+2)$ vertices from $G$ and by hypothesis $|V(G)|\geq k(k^{\frac{1}{p-1}}+2)$, we have that $V(G')\neq \emptyset$. Note that for all $v\in V(G')$, it holds that $\deg_{G'}(v)\geq(k^{\frac{1}{p-1}}+1)$.
\end{itemize}
In the second case, consider an arbitrary vertex $v\in V(G')$. Let us note $N_i$ the set of vertices at distance $i$ from $v$ in $G'$; see Fig.~\ref{fig:ramsey_gen} for an illustration.
\begin{figure}[h!]
\vspace{-1.0cm}
	\centering
		\includegraphics[width=0.7\textwidth]{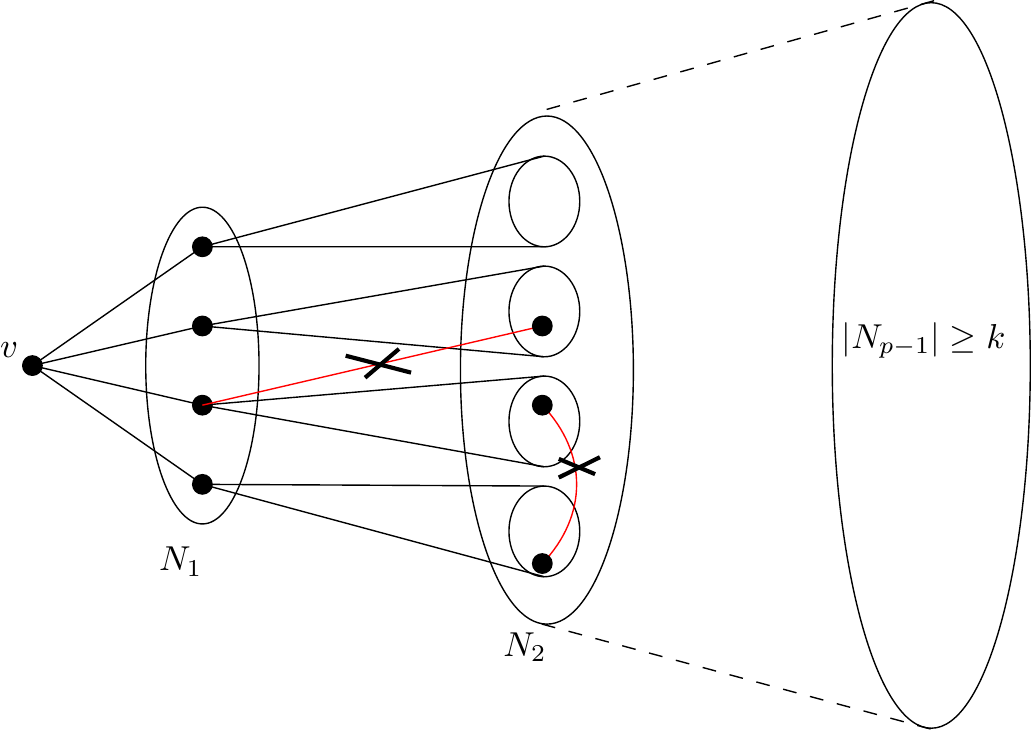}
	\caption{A large independent set in a graph of large girth. The red edges cannot exist.}
	\label{fig:ramsey_gen}
\end{figure}
We shall prove the following two properties by induction for $i \in \{1,\ldots, p-1\}$:
\begin{itemize}
\item[(i)] $N_i$ is an independent set in $G$; and
\item[(ii)] $|N_i|\geq (k^{\frac{1}{p-1}})^{i-1}(k^{\frac{1}{p-1}}+1)$.

\end{itemize}
For $i=1$, $N_1$ is an independent set because $G'$ is triangle-free, as it is an induced subgraph of a graph of girth at least $2p \geq 4$. And we have that $|N_1|=\deg_{G'}(v)\geq k^{\frac{1}{p-1}}+1$.

Suppose that these two properties are true at level $i$, for $1\leq i < p-1$. Let us show that they are also true at level $i+1$. Note first that $N_{i+1}$ is an independent set, as otherwise there would be a cycle in $G'$ of length at most $2i+3\leq 2p-1$, a contradiction (see Fig.~\ref{fig:ramsey_gen}).  On the other hand, two vertices in $N_i$ cannot have a common neighbor in $N_{i+1}$, as otherwise  there would be a cycle in $G'$ of length at most $2i+2\leq 2p-2$, a contradiction (see Fig.~\ref{fig:ramsey_gen}). That is, each vertex in $N_i$ has exactly one neighbor in $N_{i-1}$, and since all vertices in $N_i$ have degree at least $k^{\frac{1}{p-1}}+1$ in $G'$, it follows that
$$
|N_{i+1}|  \ \ \geq\ \ |N_i| \cdot k^{\frac{1}{p-1}}
	 \ \ \geq \ \ (k^{\frac{1}{p-1}})^{i}(k^{\frac{1}{p-1}}+1).
$$
Thus, by induction, $N_{p-1}$ is an independent set in $G$ and $|N_{p-1}|\geq (k^{\frac{1}{p-1}})^{p-2}(k^{\frac{1}{p-1}}+1) \geq k$, as we wanted to prove.\end{proof}

We conclude this subsection with a subtlety that we overlooked so far for the sake of simplicity. In order to have an FPT algorithm for $\{\bull,C_4,\ldots,C_{2p-1}\}$-free graphs, as we claim, we need to make sure that in Algorithm~\ref{algo:sketch} we do not create small holes in the recursive steps. One can check that the block $T_Y$, in which the recursive call is made, does not contain small holes. Nevertheless, the block $T_X$ \emph{may} contain an induced $C_4$ when the switchable pair $\{c,d\}$ is added (see~\cite{ThomasseTV13} for the precise definition of $T_X$). Fortunately,  we can obtain the same asymptotic upper bound of $O(k \cdot k^{\frac{1}{p-1}})$ on the size of $T_X$ in Step~1 of Algorithm~\ref{algo:sketch} when it belongs to the class $\mathcal{T}_{1}$, by using the same arguments, and just distinguishing one more case: if $T_X$ contains a $C_4$, then we apply Lemma~\ref{lem:highGirth} to the graph $T_X \setminus \{c\}$ (or $T_X \setminus \{d\}$), which can be easily seen to be $\{\bull,C_4,\ldots,C_{2p-1}\}$-free, and we just have to add one more vertex ($c$ or $d$) to the upper bound given by Lemma~\ref{lem:highGirth}.

%The block $T_X$ may contain holes, but it does not change the overall running time of the algorithm, since the recursive calls are done in block $T_Y$, which clearly does not contain holes. We can bound the size of $T_X$ easily, because $T_X \setminus \{c\}$ does not contain holes...,

\subsection{A lower bound in $\{\bull,C_4,\ldots,C_{2p-1}\}$-free graphs}
\label{sec:LowerBound}

In this subsection we prove Theorem~\ref{thm:AsymptOptimal}. In fact, we show the lower bound holds even for unweighted  \textsc{Independent Set}. We will reduce from the following problem.

\probl
{Sparse-3-Sat}
{A set of variables $\{x_1,\ldots,x_n\}$ and a set of 3-variable clauses $\{c_1,\ldots,c_m\}$ such that each literal appears at most $c$ times in the clauses, for some constant $c$.}
{Is there an assignment of the variables such that all the clauses are satisfied?}

\vspace{-.15cm}

The \textsc{Sparse-3-Sat} problem cannot be solved in time $2^{o(n)}$ unless the ETH fails (see for instance~\cite{KanjS13}). Our reduction consists of a modification of the classical reduction to show the NP-hardness of \textsc{Independent Set}~\cite{GJ79}.

\vspace{.3cm}

\begin{proofETH}
We will show that if we could solve \textsc{Independent Set} restricted to $\{\bull,C_4,\ldots,C_{2p-1}\}$-free graphs in time $2^{o(k)}\cdot n^{O(1)}$, the we could solve \textsc{Sparse-3-SAT} in time $2^{o(n)}$, which is impossible unless the ETH fails.

We first define a transformation from an instance $\phi$ of \textsc{Sparse-3-Sat} to a graph $G_{\phi}$. With each clause $c_j$, for $1\leq j \leq m$, we associate a triangle where each vertex corresponds to a literal of the clause. For each variable $x\in \{x_1,\ldots,x_n\}$, we add all the edges between the vertices corresponding to $x$ and all the vertices corresponding to $\overline{x}$.

%\begin{remark}
%Since all the literals appear at most $c$ times in $C$, the degree of each vertex of $G$ is bounded by $c+2$ and $|E(G)|\leq\frac{3m(c+2)}{2}$.
%\end{remark}

Observe that all the clauses $\phi$ can be satisfied if and only if the graph $G_{\phi}$ has an independent set of size $m$, and that since each literal appears in at most $c$ clauses in $\phi$, the degree of each vertex of $G_{\phi}$ is bounded by $c+2$, hence $|E(G_{\phi})|\leq\frac{3m(c+2)}{2}$.

We now transform the graph $G_{\phi}$ into a $\{\bull,C_4,\ldots,C_{2p-1}\}$-free graph $G_{\phi}'$ by replacing each edge of $G_{\phi}$ with a path on $q$ vertices, where $q$ is the smallest even integer such that $3(q+1)\geq 2p$. See Fig.~\ref{fig:reduction1} for an illustration. The newly added vertices are called \emph{internal}, and the other ones are called \emph{original}.

\begin{figure}[h!]
    \vspace{-.3cm}
	\centering
		\includegraphics[width=0.75\textwidth]{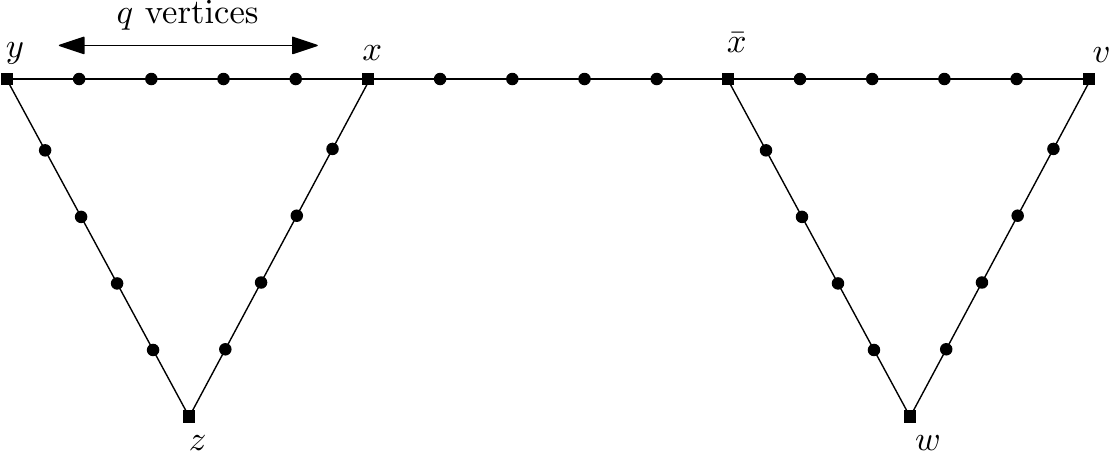}
	\caption{Construction of the graph $G_{\phi}'$ in the proof of Theorem~\ref{thm:AsymptOptimal}.}
	\label{fig:reduction1}
\end{figure}

\begin{claimN}$[\star]$
\label{claim:2} $G_{\phi}$ has an independent set of size $m$ if and only if $G_{\phi}'$ has an independent set of size $|E(G_{\phi})|\cdot\frac{q}{2}+m$. That is, all the clauses $\phi$ can be satisfied if and only if the graph $G_{\phi}'$ has an independent set of size $|E(G_{\phi})|\cdot\frac{q}{2}+m$.
\end{claimN}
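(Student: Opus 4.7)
The plan is to prove the two directions by a per-path counting argument. For each edge $uv \in E(G_\phi)$, denote the replacing path in $G_\phi'$ by $u, w_1, w_2, \ldots, w_q, v$ and set $P_{uv} := \{w_1, \ldots, w_q\}$; then $P_{uv}$ induces a path on $q$ vertices in $G_\phi'$ with $w_1 \sim u$ and $w_q \sim v$, and the internal vertex sets of distinct paths are pairwise disjoint. A path on $q$ vertices has independence number $\lceil q/2\rceil = q/2$ since $q$ is even, and a path on $q-2$ vertices has independence number $q/2 - 1$.

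For the forward direction, starting from an independent set $S$ of $G_\phi$ with $|S|=m$, I would augment $S$, for each edge $uv$, with an alternating set of exactly $q/2$ vertices of $P_{uv}$ that avoids $w_1$ whenever $u \in S$ and avoids $w_q$ whenever $v \in S$. Since $S$ is independent in $G_\phi$, at most one of $u,v$ lies in $S$, so one of the sets $\{w_1, w_3, \ldots, w_{q-1}\}$ or $\{w_2, w_4, \ldots, w_q\}$ does the job. The resulting union is independent in $G_\phi'$ and has the desired size $m + |E(G_\phi)| \cdot q/2$.

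For the backward direction, given an independent set $S'$ of $G_\phi'$ of size $m + |E(G_\phi)| \cdot q/2$, set $S := S' \cap V(G_\phi)$ and $t_{uv} := |S' \cap P_{uv}|$. Then $t_{uv} \leq q/2$ always, and moreover if both $u,v \in S$ then $w_1, w_q \notin S'$, so $S' \cap P_{uv}$ is contained in the sub-path on $\{w_2, \ldots, w_{q-1}\}$, forcing $t_{uv} \leq q/2 - 1$. Letting $r$ be the number of edges of $G_\phi$ with both endpoints in $S$, summing over $E(G_\phi)$ yields
\[
m + |E(G_\phi)| \cdot q/2 \;=\; |S'| \;=\; |S| + \sum_{uv \in E(G_\phi)} t_{uv} \;\leq\; |S| + |E(G_\phi)| \cdot q/2 - r,
\]
so $|S| \geq m + r$. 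The induced subgraph $G_\phi[S]$ has $|S|$ vertices and $r$ edges, so greedily removing one endpoint per edge yields an independent set of $G_\phi$ of size at least $|S| - r \geq m$. The equivalence with satisfiability of $\phi$ is then the standard clause-gadget argument: an independent set of size $m$ must pick exactly one vertex per clause-triangle, and the edges between complementary literals force consistency of the induced truth assignment.

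The main obstacle is the backward direction, where $S$ is a priori not independent in $G_\phi$; the whole argument hinges on the tight per-path deficit $t_{uv} \leq q/2 - 1$ when both endpoints of $uv$ lie in $S$, which exactly compensates the loss incurred when cleaning up $S$ into a genuine independent set of $G_\phi$. The forward construction and the reduction from satisfiability are routine.
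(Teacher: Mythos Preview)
Your proof is correct. The forward direction matches the paper's argument. For the backward direction, however, you take a genuinely different route.

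The paper argues iteratively: as long as some edge $xy\in E(G_\phi)$ has both endpoints in $S'$, it locates on the subdivided path an index $i$ where two consecutive internal vertices $x_i,x_{i+1}$ are missing from $S'$ (such an index exists since $q$ is even), and then shifts the selected vertices along the path so as to eject $x$ from $S'$ while preserving $|S'|$. Repeating until no bad edge remains yields an $S'_0$ with $S'_0\cap V(G_\phi)$ independent in $G_\phi$ of size at least $m$.

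You instead use a direct counting argument: each path with both endpoints in $S$ contributes at most $q/2-1$ internal vertices, giving a global deficit of $r$; hence $|S|\geq m+r$, and since $G_\phi[S]$ has only $r$ edges, a greedy deletion recovers an independent set of size $\geq m$. This is shorter and avoids the explicit shifting construction; the paper's approach, on the other hand, is more constructive in that it exhibits an actual independent set in $G_\phi'$ whose trace on $V(G_\phi)$ is already independent. Both are standard and equally valid for the claim at hand.
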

%\begin{proof}
%\end{proof}

To conclude, assume that we can solve \textsc{Independent Set} in $\{\bull,C_4,\ldots,C_{2p-1}\}$-free graphs on $t$ vertices in time $2^{o(k)}\cdot t^{O(1)}$, and let $k = |E(G_{\phi})|\cdot\frac{q}{2}+m$. Then, by Claim~\ref{claim:2}, by solving \textsc{Independent Set} in $G_{\phi}'$ we could solve \textsc{Sparse-3-Sat} in time $2^{o(|E(G_{\phi})|\cdot\frac{q}{2}+m)}\cdot (3m+|E(G_{\phi})|\cdot q)^{O(1)}=2^{o(n)}$, where we have used that  $|E(G_{\phi})|\leq\frac{3m(c+2)}{2}$ and that $m \leq 2c \cdot n$. This is impossible unless the ETH fails.
\end{proofETH}

\section{Conclusions and further research}
\label{sec:conclusions}

We showed in Theorem~\ref{thm:BetterAlgo} that \textsc{Weighted Independent Set} in bull-free graphs can be solved  in time $2^{O(k^2)} \cdot n^7$, and the lower bound of Theorem~\ref{thm:AsymptOptimal} states that the problem cannot be solved in time $2^{o(k)} \cdot n^{O(1)}$  in bull-free graphs unless the ETH fails. Closing this complexity gap (in terms of $k$) is an interesting avenue for further research.

It is tempting to try to apply similar techniques for obtaining FPT algorithms for other (NP-hard) problems in bull-free graphs. The \textsc{Independent Feedback Vertex Set}  problem may be a natural candidate.

Feghali, Abu-Khzam and M\"{u}ller~\cite{FAM14} have recently shown that the problem of deciding whether the vertices of a graph can be partitioned into a triangle-free subgraph and a disjoint union of cliques is NP-complete in planar and perfect graphs. Note that this problem is closely related to deciding whether a given graph belongs to the class $\mathcal{T}_1$ of basic bull-free graphs. Is this problem NP-complete when restricted to bull-free graphs? The recognition of the class $\mathcal{T}_1$ has also been left as an open question in~\cite{ThomasseTV13}.

%{\small
\bibliographystyle{abbrv}
\bibliography{bib-bull-free}
%}

\newpage
\begin{appendix}

\section{Proof of Lemma~\ref{lemma:algo2}}
\label{ap:algo}

We shall present an algorithm to find a minimally-sided homogeneous cut in a trigraph $T$ that runs in time $O(n^6)$. The algorithm first tries to find a minimally-sided homogeneous set. For doing this, we reuse the same algorithm described in~\cite[Lemma 4.2]{ThomasseTV13}, which runs in time $O(n^2)$. Then, in order to find a minimally-sided proper homogeneous pair, the approach in~\cite{ThomasseTV13} makes $O(n^5)$ calls to the the algorithm of~\cite[Lemma 4.1]{ThomasseTV13}, which runs in time $O(n^2)$, yielding an overall complexity of $O(n^7)$. We proceed to improve this part.

%Définition : Soit $T$ un trigraphe, on appelle \emph{paire homogène propre minimale}, une paire homogène propre $(A,B)$ de $T$ telle qu'il n'existe pas de coupe homogène $(X,Y)$ de $T$ telle que $X\subsetneq A\cup B$.\newline
%
%Observons qu'une paire homogène propre qui n'est pas minimale ne peut pas être une coupe homogène de taille minimale.

We describe in Algorithm~\ref{algo:properHomPair} below how to find minimally-sided proper homogenous pairs. This algorithm is strongly inspired from ~\cite[Lemma 4.1]{ThomasseTV13}, but even if its complexity is still quadratic, the difference lies on the fact that we will need to run Algorithm~\ref{algo:properHomPair} $O(n^4)$ times instead of $O(n^5)$, because we will only need to guess 4 vertices.

More precisely, in order to find a minimally-sided proper homogeneous pair, we run Algorithm~\ref{algo:properHomPair} for all quadruples of vertices $(a_{1},a_{2},c,d)$ such that $a_1$ and $a_2$ are strongly adjacent to $c$ and strongly antiadjacent to $d$.  Therefore, we have an algorithm running in time $O(n^6)$.

%tels que $a_{1}$ and $a_{2}$ soient fortement adjacents à $c$ et fortements anti-adjacents à $d$. On a donc un algorithme en $O(n^6)$ pour trouver une paire homogène propre de poids minimum.

We would like to point out that the algorithm does \emph{not} always output a proper homogeneous pair which is \emph{minimally-sided}. Namely, the algorithm outputs the following: either a \php~that may be minimally-sided, or it guarantees that there is no \mphp~$(A,B)$ such that $a_{1},a_{2}\in A$ and $c,d\notin A\cup B$.

For the readability of the algorithm, let $\mathcal{P}$ be the following property:
\vspace{.2cm}

\noindent \textbf{Property~$\mathcal{P}$}: \textit{There is no \mphp~$(A,B)$ such that $a_{1},a_{2}\in A$ and $c,d\notin A\cup B$.}
%\hfill ($\overline{\mathcal{P}}$)
\vspace{.2cm}

We are now ready to provide a formal description of Algorithm~\ref{algo:properHomPair}.

\begin{algorithm}
%\begin{small}
\DontPrintSemicolon
\KwIn{A  trigraph $T$, 4 vertices $a_{1},a_{2}, c$, and $d$ such that $a_{1}$ and $a_{2}$ are strongly adjacent to $c$ and strongly antiadjacent to $d$.}
\KwOut{A smallest proper homogeneous pair $(A,B)$ such that $a_{1},a_{2}\in A$ and $c,d\notin A\cup B$, if it exists, or Property~$\mathcal{P}$ otherwise.}
\Begin{
	$R=\{a_{1}, a_{2}\}$, $S=V\backslash R$, $A=\emptyset$, $B=\emptyset$.\;
	We mark the vertices of $V(T)$ as follows:
	\begin{itemize}
	\item[$\bullet$] $\alpha$ for the vertices strongly adjacent to $c$ and strongly antiadjacent to $d$;\;
	\item[$\bullet$] $\beta$ for the vertices strongly adjacent to $d$ and strongly antiadjacent to $c$; and\;
	\item[$\bullet$] $\epsilon$ for the remaining vertices.
	\end{itemize}
	\While{there is a marked vertex $x$ in $R$}{
		\If{$x$ is marked $\epsilon$}{
			Output $\mathcal{P}$.}
		\If{$x$ is marked $\alpha$}{
			Move the following sets from  $S$ to $R$: $\sigma(x)\cap S$, $(\eta(x)\cap S)\backslash\eta(a)$ and $(\eta(a)\cap S)\backslash\eta(x)$. Move $x$ from $R$ to $A$.\;
		}
		\If{$x$ is marked $\beta$}{
			\If{$B$ is empty}{
				Let $b :=x$. Move  $\sigma(b)\cap S$ from $S$ to $R$.\;
				Move $b$ from $R$ to $B$.\;
			}
			\Else{
				Move the following sets from $S$ to $R$: $\sigma(x)\cap S$, $(\eta(x)\cap S)\backslash\eta(b)$ and $(\eta(b)\cap S)\backslash\eta(x)$. Move $x$ from $R$ to $B$.\;
			}
		}
	}
	\If{$B$ is empty} {
		$A$ is a homogeneous set: output $\mathcal{P}$.}
	\Else{
		\If{$B$ is either strongly complete or strongly anticomplete to $A$}{$A$ is a homogeneous set: Output $\mathcal{P}$.}}
		\Else{\If{$|S|\geq 3$}{Output $(A,B)$.}
			\Else{Output $\mathcal{P}$.}
}		
}
%\end{small}
\caption{Algorithm for finding minimally-sided proper homogeneous pairs.}\label{algo:properHomPair}
\end{algorithm}

We want to prove that Algorithm~\ref{algo:properHomPair} considers all minimally-sided proper homogeneous pairs, as these pairs are the only ones that may define a minimally-sided homogeneous cut. Let $(A_m,B_m)$ be a minimally-sided proper homogeneous pair, and let $(A_m,B_m,C_m,D_m,E_m,F_m)$ be the corresponding partition. Without loss of generality, we may assume that $|A_m|\geq 2$.

%Montrons que $(A_m,B_m)$ a été retournée par l'Algorithme \ref{php}, pour un certain quadruplet $(a_1,a_2,c,d)$. On choisit $a_1,a_2\in A_m$, $c\in C_m$, $d\in D_m$.

\begin{claimN}\label{claim:3}
The pair $(A_m,B_m)$ is returned by Algorithm~\ref{algo:properHomPair} for a certain quadruple $(a_1,a_2,c,d)$, with $a_1,a_2\in A_m$, $c\in C_m$, and $d\in D_m$.
\end{claimN}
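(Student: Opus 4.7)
The strategy is to pick the four input vertices inside the pair $(A_m,B_m)$ itself, show by induction that the algorithm can never ``leak out'' of $A_m\cup B_m$, and then let the minimally-sided hypothesis close the gap between what the algorithm finds and $(A_m,B_m)$.

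First, I will choose the quadruple. Since $(A_m,B_m)$ is a \emph{proper} homogeneous pair, both $C_m$ and $D_m$ are nonempty, so pick arbitrary $c\in C_m$ and $d\in D_m$. By the WLOG assumption $|A_m|\geq 2$, pick two distinct vertices $a_1,a_2\in A_m$. The relations defining a homogeneous pair (with $A_m$ strongly complete to $C_m\cup E_m$ and strongly anticomplete to $D_m\cup F_m$) guarantee that $a_1$ and $a_2$ are strongly adjacent to $c$ and strongly antiadjacent to $d$, so the input conditions of Algorithm~\ref{algo:properHomPair} are met. Note further that every vertex of $A_m$ receives the mark $\alpha$ and every vertex of $B_m$ receives the mark $\beta$, because $A_m$ is strongly complete to $\{c\}$ and strongly anticomplete to $\{d\}$, and symmetrically for $B_m$.

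Second, I will prove the following invariant by induction on the iterations of the while loop: $A\subseteq A_m$, $B\subseteq B_m$, and $R\subseteq (A_m\cup B_m)\setminus(A\cup B)$. Initially $R=\{a_1,a_2\}\subseteq A_m$, so this holds. For the inductive step, consider an arbitrary iteration processing $x\in R$. By the invariant $x\in A_m\cup B_m$, so $x$ is marked $\alpha$ or $\beta$ and the $\epsilon$-branch outputting $\mathcal{P}$ is never executed. When $x\in A_m$ (mark $\alpha$), the key observation is that the reference vertex $a\in A$ also lies in $A_m$, and both $x$ and $a$ have \emph{identical} strong adjacencies to every vertex of $V(T)\setminus(A_m\cup B_m)=C_m\cup D_m\cup E_m\cup F_m$ (strongly complete to $C_m\cup E_m$, strongly anticomplete to $D_m\cup F_m$), so no vertex of that set lies in $\sigma(x)$, in $(\eta(x)\cap S)\setminus\eta(a)$, or in $(\eta(a)\cap S)\setminus\eta(x)$. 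Hence every vertex moved from $S$ into $R$ during this iteration belongs to $A_m\cup B_m$, and $x$ is moved to $A\subseteq A_m$. The symmetric argument handles $x\in B_m$ with reference vertex $b\in B_m$. Thus the invariant is maintained throughout.

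Third, I will show that the algorithm must terminate by outputting a proper homogeneous pair $(A,B)$ (and not Property~$\mathcal{P}$). Indeed, the correctness of Algorithm~\ref{algo:properHomPair}, proved independently by tracing its termination conditions as in~\cite[Lemma~4.1]{ThomasseTV13}, guarantees that $\mathcal{P}$ is output only when no \mphp~$(A',B')$ with $a_1,a_2\in A'$ and $c,d\notin A'\cup B'$ exists; but $(A_m,B_m)$ is precisely such a pair by the choice of the quadruple. Hence the output is a proper homogeneous pair $(A,B)$ with $A\subseteq A_m$ and $B\subseteq B_m$ by the invariant. Finally, $(A\cup B,V(T)\setminus(A\cup B))$ is a homogeneous cut contained in the homogeneous cut associated with $(A_m,B_m)$. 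Minimal-sidedness of the latter then forces $A\cup B=A_m\cup B_m$, and combined with $A\subseteq A_m$ and $B\subseteq B_m$, this yields $(A,B)=(A_m,B_m)$, proving the claim.

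The main obstacle is the invariant in the second paragraph: the argument hinges on the delicate fact that the three update sets $\sigma(x)\cap S$, $(\eta(x)\cap S)\setminus\eta(a)$, and $(\eta(a)\cap S)\setminus\eta(x)$ cannot reach outside $A_m\cup B_m$, and that the symmetric statement holds for the reference vertex $b$ which is only defined the first time a $\beta$-vertex is encountered. A secondary but minor subtlety is to check that the end-of-loop tests of Algorithm~\ref{algo:properHomPair} ($B\neq\emptyset$, $B$ neither strongly complete nor strongly anticomplete to $A$, and $|S|\geq 3$) are compatible with the output $(A_m,B_m)$; these follow directly from the definition of proper homogeneous pair applied to $(A_m,B_m)$.
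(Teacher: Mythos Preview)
Your proposal is correct and follows essentially the same approach as the paper's own proof: an induction over the iterations of the while loop showing that the sets $A$ and $B$ constructed by the algorithm remain inside $A_m$ and $B_m$ respectively, after which minimal-sidedness of $(A_m,B_m)$ forces equality. The paper's version is slightly more general in that it proves $A\subseteq A'$ and $B\subseteq B'$ for \emph{every} proper homogeneous pair $(A',B')$ with $a_1,a_2\in A'$ and $c,d\notin A'\cup B'$ (thus also establishing that the output is the \emph{smallest} such pair, as promised in the algorithm's specification), whereas you work directly with $(A_m,B_m)$; but for the purposes of Claim~\ref{claim:3} your argument suffices, and you are in fact more explicit than the paper in tracking the set $R$ and in checking the end-of-loop termination conditions.
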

\begin{proof} We proceed to show inductively that by construction, the vertices in $A \cup B$ at the end of the algorithm necessarily belong to all proper homogeneous pairs $(A',B')$ with $a_{1},a_{2}\in A'$ and $c,d\notin A'\cup B'$.

Let $A_i$ and $B_i$ be the sets  $A$ and $B$, respectively, at the end of step $i$ of the algorithm, with $i\leq n$. Let us show that at each step $i$, the sets $A_i$ et $B_i$ satisfy $A_i\subseteq A'$ and $B_i\subseteq B'$.

This property is true for $A_0=\{a_1,a_2\}$ and $B_0=\emptyset$. Suppose it is true at step $i<n$, that is, $A_i\subseteq A'$ and $B_i\subseteq B'$, and let us prove that it is also true at step $i+1$. Let $x_{i+1}$ be the vertex that is added to $A_i$ or to $B_i$ at step $i+1$. As $x_{i+1}\in R$, either $x_{i+1}$ is not strongly adjacent or strongly antiadjacent to $A_i$, or $x_{i+1}$ is not strongly adjacent or strongly antiadjacent to $B_i$. As $A_i\subseteq A'$ and $B_i\subseteq B'$, necessarily $x_{i+1}$ belongs to $A\cup B$. Thus, $x_{i+1}$ is either strongly adjacent to $c$ (if $x_{i+1}\in A$) and then  $x_{i+1}$ is marked $\alpha$ and belongs to $A_{i+1}$, or strongly adjacent to $d$ (if $x_{i+1}\in B$) and then $x_{i+1}$ is marked $\beta$ and belongs to $B_{i+1}$. In both cases, we have that  $A_{i+1}\subseteq A'$ and $B_{i+1}\subseteq B'$.

Therefore,  $A\subseteq A'$ and $B\subseteq B'$, and in particular $A\subseteq A_m$ and  $B\subseteq B_m$. But since $(A_m,B_m)$ is a minimally-sided proper homogeneous set, it follows that $A=A_m$ and $B=B_m$, hence the pair $(A_m,B_m)$ is indeed returned by Algorithm~\ref{algo:properHomPair}.\end{proof}

\section{Proof of Claim~\ref{claim:2}}
\label{ap:claim2}

First, if $G_{\phi}$ has an independent set $S$ of size $m$, we take all the vertices of $S$ and we add $\frac{q}{2}$ internal vertices for each original edge. We can add so many vertices since at most one vertex of each original edge of $G_{\phi}$ can be in $S$.

Conversely, suppose that $G_{\phi}'$ has an independent set $S'$ of size $|E(G_{\phi})|\cdot\frac{q}{2}+m$. As $S'$ cannot contain more than $|E(G_{\phi})|\cdot\frac{q}{2}$ internal vertices, there are at least $m$ vertices of $V(G_{\phi})$ in $S'$.
%(It is true for every independent set of $G'$ of size $|E(G)|\cdot\frac{q}{2}+m$).
Let  $\eta$ be the number of edges $xy\in E(G_{\phi})$ such that both $x$ and $y$ are in $S'$. If $\eta=0$, then $S'\cap V(G_{\phi})$ is an independent set of $G$ of size at least $m$, and we are done. We now now that if $\eta>0$, there exists an independent set $S''$ in $G_{\phi}'$ such that $|S''|=|S'|$ and with strictly less than $\eta$ edges $xy\in E(G_{\phi})$ such that both $x$ and $y$ are in $S''$.

Let $x,y \in S'$  be such that $xy\in E(G)$, and let us note $(x=x_0,x_1,\ldots,x_q,y)$ the path between  $x$ and $y$ in $G_{\phi}'$ induced by the subdivision of the edge $xy$. Let $i$ be the smallest integer in $\{1,\ldots ,q\}$ such that $x_i$ and $x_{i+1}$ are not in $S'$. Note that such an integer $i$ exists since $q$ is an even number, and observe that $i$ is an odd number; see Fig.~\ref{fig:reduction2} for an illustration, where the red vertices belong to $S'$.

\begin{figure}[h!]
	\centering
		\includegraphics[width=0.80\textwidth]{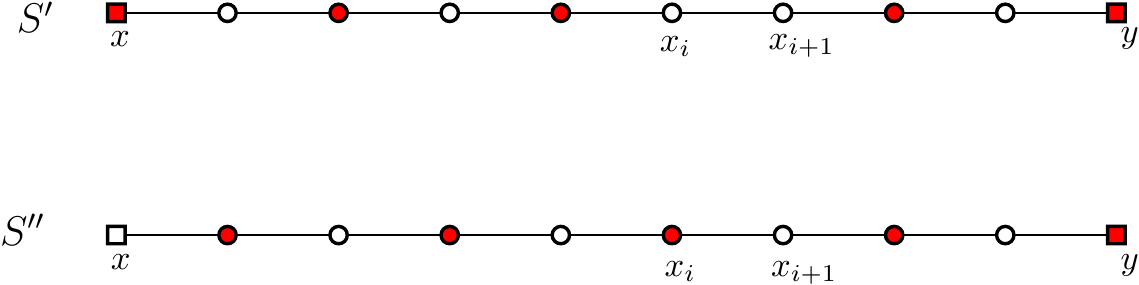}
	\caption{Decreasing the parameter $\eta$ in the proof of Theorem~\ref{thm:AsymptOptimal}.}
	\label{fig:reduction2}
\end{figure}

We now construct $S''$ as follows: we initialize $S''=S'$, and for all $j\in\{0,\ldots, \frac{i-1}{2}\}$, we remove $x_{2j}$ from $S''$ and we add $x_{2j+1}$. Observe that since $x_{i+1}$ is not in $S'$, $S''$ is indeed an independent set of size $|E(G_{\phi})|\cdot\frac{q}{2}+m$ such that the parameter $\eta$ has strictly decreased; see the lower part of Fig.~\ref{fig:reduction2}.

Repeating this procedure while $\eta > 0$, we eventually obtain an independent set $S'_0$ of $G'_{\phi}$ of size $|E(G_{\phi})|\cdot\frac{q}{2}+m$ such that there are no two vertices $x,y\in S'_0$ such that $xy\in E(G_{\phi})$. Therefore, $S'_0\cap V(G_{\phi})$ is an independent set in $G_{\phi}$. Furthermore, it has size at least $m$ since there cannot be more than $|E(G_{\phi})|\cdot\frac{q}{2}$ internal vertices in $S'_0$.

\end{appendix}

\end{document}